\documentclass[12pt]{article}

\usepackage{amsmath, amssymb, amsthm}  % Math packages
\usepackage{fancyhdr}                  % For custom headers and footers
\usepackage{graphicx}                  % For including graphics
\usepackage{hyperref}                  % For hyperlinks in the document
\usepackage{geometry}                  % To change the page dimensions
\usepackage{natbib}
\usepackage{subcaption,verbatim}
\usepackage{xcolor}
\usepackage{color}
\usepackage{blkarray}
\usepackage[utf8]{inputenc}
\usepackage[english]{babel}
\usepackage[normalem]{ulem}
\usepackage{xr-hyper}

\newcommand{\bM}{\mathbf{M}}

\newcommand{\ke}{\scriptscriptstyle(k+1)}
\newcommand{\ka}{\scriptscriptstyle(k)}
\newcommand{\LL}{\mathbf{L}}

\def\bSig\boldsymbol{\Sigma}

\newtheorem{theorem}{Theorem}

\newcommand{\sumas}{\sum^n_{i=1}}

\newcommand{\ii}{i\in\{1,\ldots,n\}}

\newcommand{\by}{{\bf Y}_j}
\newtheorem{Lemma}{Lemma}
\newtheorem{proposition}{Proposition}

\newtheorem{corollary}{Corollary}
\newtheorem{remark}{Remark}

\newcommand{\bmu}{\boldsymbol{\mu}}

\newcommand{\bSigma}{\boldsymbol{\Sigma}}

\newcommand{\bbeta}{\mbox{\boldmath $\beta$}}
\newcommand{\btheta}{\mbox{\boldmath $\theta$}}
\newcommand{\bTheta}{\mbox{\boldmath $\Theta$}}

\newcommand{\bDelta}{\mbox{\boldmath $\Delta$}}

\newcommand{\x}{\boldsymbol{x}}
\newcommand{\w}{\boldsymbol{w}}

\newcommand{\B}{\boldsymbol{B}}

\newcommand{\bPsi}{\mbox{\boldmath $\Psi$}}

\newcommand{\bOmega}{\mbox{\boldmath $\Omega$}}

\newcommand{\bgamma}{\mbox{\boldmath $\gamma$}}

\newcommand{\bC}{\boldsymbol{C}}

\newcommand{\Y}{\mathbf{y}}
\newcommand{\bY}{\mathbf{Y}}

\newcommand{\bD}{\boldsymbol{D}}

\newcommand{\X}{\boldsymbol{X}}

\newcommand{\bV}{\boldsymbol{V}}

\newcommand{\tr}{\textrm{tr}}

\newcommand{\bm}{\mathbf{m}}

\usepackage{xcolor}

\title{Multiple Heckman Selection Model}

\author{{ Heeju Lim$^{1}$,  Carlos A.R. Diniz$^{2}$, Ofer Harel$^{1}$ and Victor H. Lachos$^{1}$} \\
{\small$^{(1)}$ Department of Statistics, University of Connecticut, CT-06269, U.S.A.}\\
{\small$^{(2)}$ Department of Statistics, Federal University of São Carlos, SP, Brazil.}
}

\begin{document}
\maketitle
\begin{abstract}
We introduce a novel matrix-variate extension of the Heckman selection model to accommodate multiple outcomes, providing a flexible and natural generalization of classical selection models for matrix-valued data. By relying on the matrix normal distribution, the proposed model captures dependencies across both rows and columns while accounting for selection bias. An Expectation/Conditional Maximization (ECM) algorithm is developed, yielding closed-form updates for all model parameters. We investigate key theoretical properties, including the connection between sample selection models and the recently developed multivariate unified skew-normal (SUN) distribution. The performance of the proposed approach is assessed through simulation studies, and its practical utility is illustrated using two real datasets. The proposed method is implemented in the \textsf{R} package \texttt{mvHeckman}.
\end{abstract}

\noindent\textbf{Keywords:} ECM algorithm,  Matrix-variate normal distribution,  Multivariate SUN distribution, Sample selection models.

%\bibliographystyle{unsrtnat}
%\bibliography{references}  %%% Uncomment this line and comment out the ``thebibliography'' section below to use the external .bib file (using bibtex) .

\section{Introduction}
\label{Intro}

{
Sample selection bias arises when outcomes are observed only for a subset of the population and the selection mechanism depends on unobserved factors, leading to biased inference. To address this issue, \citet{heckman1974shadow} introduced a parametric framework based on a joint normality assumption. However, this assumption is often violated in practice, particularly when the data exhibit heavy tails or skewness. To accommodate such features, \citet{marchenko2012heckman} proposed the selection-$t$ model, which replaces the normal distribution with a heavier-tailed alternative. More recently, \citet{lim2025heckman} introduced a selection model based on the contaminated normal distribution, providing a flexible framework that captures both typical observations and deviations such as outliers. These developments reflect ongoing efforts to construct more robust selection models that better represent the distributional characteristics of empirical data.
\\
\indent While the classical Heckman model provides a useful framework for addressing selection bias, it is primarily designed for a single outcome variable. In many empirical applications, however, multiple outcomes are observed simultaneously and may be interrelated. For example, economic behaviors, health indicators, or consumption patterns often exhibit complex dependencies across variables. Modeling such outcomes separately may ignore these dependencies, potentially leading to inefficient inference or biased conclusions. These considerations motivate the extension of sample selection models to multivariate settings.
\\
\indent Several studies have extended the classical selection framework to accommodate multiple outcomes. For example, \citet{yen2005multivariate} developed a multivariate sample selection model to jointly analyze related demand systems. Subsequent work has examined the statistical properties and estimation of such models, highlighting the challenges associated with high-dimensional dependence structures and the complexity of the likelihood function \citep{tauchmann2010consistency}. More recent contributions have explored extensions of the selection mechanism itself, allowing for multiple selection rules and more flexible participation processes beyond the standard binary framework \citep{rezaee2022sample}. Estimation in these models typically relies on full maximum likelihood, simulation-based methods, or extensions of the two-step procedure, which can be computationally demanding in high-dimensional settings.
\\
\indent Despite these advances, most existing approaches treat multivariate outcomes as vector-valued and capture dependence primarily through unstructured covariance matrices, without explicitly exploiting potential structure across dimensions. To address these limitations, we propose a matrix-variate sample selection model that explicitly accounts for such structured dependencies while allowing for deviations from normality. The proposed framework extends existing selection models by incorporating matrix-valued responses and a flexible error specification. We further investigate key theoretical properties of the model, showing that the induced distribution of the observed outcomes is closely related to the multivariate unified skew-normal (SUN) family. This connection provides additional insight into the distributional effects of sample selection and supports the theoretical foundation of the proposed approach. In addition, we develop an efficient ECM algorithm for likelihood-based inference and provide an implementation of the proposed method in the \texttt{mvHeckman} R package. The performance of the proposed approach is evaluated through simulation studies, where we examine parameter recovery and compare its performance with the classical univariate Heckman model.
\\
\indent The remainder of the paper is organized as follows. Section~2 reviews the classical Heckman selection model and its key properties. Section~3 introduces the multivariate extension and presents its matrix formulation. Section~4 develops the proposed matrix-variate selection model, including the likelihood function and estimation procedure based on the ECM algorithm. Section~5 presents simulation studies to evaluate the performance of the proposed method. Section~6 illustrates the proposed methodology through applications to labor supply data and biomedical measurements, highlighting its broad applicability across domains. Taken together, these developments provide a coherent framework for modeling multivariate outcomes under structured dependence and sample selection. Concluding remarks and potential directions for future research are discussed in Section~7. Technical derivations and supporting material are gathered in the Appendix.
}

\section{Preliminaries}\label{sec:prel}

In this section, we present key notation and preliminary results used throughout the manuscript. We begin by introducing the notation used throughout the paper. A random matrix of dimension $p \times q$ is denoted by $\boldsymbol{\mathcal{X}}$, and its realization by $\mathcal{X}$. A random vector of length $pq$ is denoted by $\mathbf{X}$, and its realization by $\mathbf{x}$. The notation $|\cdot|$ denotes the determinant of a square matrix, $\operatorname{tr}(\cdot)$ denotes its trace, and $\operatorname{vec}(\cdot)$ denotes the vectorization of a matrix (i.e., stacking its columns into a single vector). The symbol $\otimes$ denotes the Kronecker product, and $U \perp Z$ denotes independence between $U$ and $Z$. 

Throughout this paper, $\bY \sim \mathcal{N}_p(\bmu, \bSigma)$ denotes a $p$-variate random vector following a multivariate normal distribution with mean vector $\bmu$ and covariance matrix $\bSigma$. Its probability density function (pdf) and cumulative distribution function (cdf) are denoted by $\phi_p(\cdot \mid \bmu, \bSigma)$ and $\Phi_p(\cdot \mid \bmu, \bSigma)$, respectively. We also write $\Phi_p(\mathbf{a}, \mathbf{b} \mid \bmu, \bSigma)$ for $\Pr(\mathbf{a} < \bY < \mathbf{b})$. When $p = 1$, we omit the subscript $p$; in this case, if $\mu = 0$ and $\sigma^2 = 1$, we write $\phi(\cdot)$ and $\Phi(\cdot)$ for the pdf and cdf, respectively.

\subsection{Matrix-variate normal distribution}
\label{sec:MT}

The matrix-variate normal (MVN) distribution is a fundamental distribution in the matrix-variate setting, mainly due to its mathematical tractability \citep{gupta1999matrix}. A $p \times q$ random matrix $\boldsymbol{\mathcal{Y}}$ is said to follow an MVN distribution with mean matrix $\bM$ and covariance matrices $\bSigma$ and $\bPsi$ of dimensions $p \times p$ and $q \times q$, respectively, if its pdf is given by
\begin{equation}
f(\mathcal{Y} \mid \bM, \bSigma, \bPsi)
= \frac{1}{(2\pi)^{pq/2} |\bPsi|^{p/2} |\bSigma|^{q/2}}
\exp\left\{
-\frac{1}{2} \operatorname{tr}\left[\bPsi^{-1}(\mathcal{Y}-\bM)^{\top}\bSigma^{-1}(\mathcal{Y}-\bM)\right]
\right\}.
\label{eq:MVN}
\end{equation}

We write $\boldsymbol{\mathcal{Y}} \sim \mathcal{N}_{p \times q}(\bM, \bSigma, \bPsi)$.

%The following lemma will be useful in subsequent developments.

\begin{comment}
\begin{Lemma}\label{identity_1} 
Let $\textbf{P}\in\mathbb{R}^{m\times n}$, $\textbf{Q}\in\mathbb{R}^{n\times p}$, $\textbf{R}\in\mathbb{R}^{p\times q}$ and $\textbf{S}\in\mathbb{R}^{q\times m}$. Then, the following identity holds: 
\begin{align*} 
\operatorname{tr}(\textbf{P}\textbf{Q}\textbf{R}\textbf{S}) & = \operatorname{vec}(\textbf{S}^{\top})^{\top}(\textbf{R}^{\top}\otimes\textbf{P})\operatorname{vec}(\textbf{Q}). 
\end{align*} 
\end{Lemma}
\begin{proof} First, we have from the properties of the Kronecker product and $\operatorname{vec}$ operator that for any conformable matrices  $\operatorname{tr}(\textbf{A}^{\top}\textbf{B})=\operatorname{vec}(\textbf{A})^{\top}\operatorname{vec}(\textbf{B})$ and $\operatorname{vec}(\textbf{P}\textbf{Q}\textbf{R})=(\textbf{R}^{\top}\otimes\textbf{P})\operatorname{vec}(\textbf{Q})$ - see \citep{petersen2008matrix}. Thus,
\begin{align*} \operatorname{tr}(\textbf{P}\textbf{Q}\textbf{R}\textbf{S}) & = \operatorname{tr}(\textbf{S}\textbf{P}\textbf{Q}\textbf{R})=\operatorname{tr}((\textbf{S}^{\top})^{\top}\textbf{P}\textbf{Q}\textbf{R})\\ & = \operatorname{vec}(\textbf{S}^{\top})^{\top}\operatorname{vec}(\textbf{P}\textbf{Q}\textbf{R})\\ & = \operatorname{vec}(\textbf{S}^{\top})^{\top}(\textbf{R}^{\top}\otimes\textbf{P})\operatorname{vec}(\textbf{Q}), 
\end{align*} 
which concluded the proof.
\end{proof}
\end{comment}

An equivalent representation of the MVN distribution can be obtained via vectorization. Specifically,
\begin{equation}
\boldsymbol{\mathcal{Y}} \sim \mathcal{N}_{p \times q}(\bM, \bSigma, \bPsi)
\Longleftrightarrow
\operatorname{vec}(\boldsymbol{\mathcal{Y}}) \sim \mathcal{N}_{pq}(\operatorname{vec}(\bM), \bPsi \otimes \bSigma).
\label{eq:iff}
\end{equation}
%This result follows from standard properties of the $\operatorname{vec}$ operator and the Kronecker product (see, e.g., \citealp{petersen2008matrix}). 

It is also important to note an identifiability issue involving the two covariance matrices. Indeed, $\bPsi \otimes \bSigma = \bPsi^* \otimes \bSigma^*$ if $\bSigma^* = a \bSigma$ and $\bPsi^* = a^{-1} \bPsi$, for any $a > 0$. Therefore, $\bSigma$ and $\bPsi$ are identifiable only up to a multiplicative constant. Several approaches have been proposed in the matrix-variate literature to address this issue; see, e.g., \citep{lachos2025}.  In this paper, we impose the constraint $\Sigma_{pp} = 1$, a standard normalization in Heckman-type selection models, which ensures identifiability. The regular Heckman-type selection framework is introduced in the next section.

\subsection{The Heckman selection model}\label{SLnModel}

In many applied contexts, missing or incomplete data arise due to non-random sample selection mechanisms. The Heckman selection model consists of a linear
equation for the outcome, and a Probit equation for the sample selection mechanism. The outcome equation is
\begin{equation}\label{1HS}
Y_{1i}=\x^{\top}_i\bbeta+\epsilon_{1i},
\end{equation}
and the sample selection mechanism is characterized by the following latent linear equation, for $i \in \{1, \ldots, n\}$,
\begin{equation}\label{2HS}
Y_{2i}=\w^{\top}_i\bgamma+\epsilon_{2i}.
\end{equation}
The vectors $\bbeta\in {\mathbb R}^{p_1}$ and 
$\bgamma \in {\mathbb R}^{q_1}$ are unknown regression parameters. $\x^{\top}_i=(x_{i1},\ldots,x_{ip_1})$ and  $\w^{\top}_i=(w_{i1},\ldots,w_{iq_1})$ are known characteristics. The covariates in $\x_i$ and $\w_i$ may overlap with each other, and the exclusion restriction holds when at least
one of the elements of $\w_i$ are not in $\x_i$.  The indicator for sample selection is $C_i = \mathbb{I}(Y_{2i} > 0)$. Let $V_{1i}$ be the observed outcome; we observe $V_{1i}$ if and only if $C_i = 1$, i.e., $Y_{1i} = V_{1i}$ if $C_i = 1$, and $Y_{1i}$ is unobserved (missing) otherwise.

Under the assumption that the error terms follow a bivariate normal distribution, the Heckman selection model \citep{heckman1979sample} yields the selection normal model (SLn), in which
\begin{equation}\label{nerror}
\begin{pmatrix}
\epsilon_{1i}  \\
\epsilon_{2i}
\end{pmatrix}
\sim \mathcal{N}_{2}\left( \mathbf{0},
\bSigma
\right), \quad 
\bSigma=
\begin{pmatrix}
\sigma^2 & \rho\sigma \\
\rho\sigma & 1
\end{pmatrix}.
\end{equation}
The second diagonal element is fixed to 1 to ensure identifiability. The SLn model defined by \eqref{1HS}--\eqref{nerror} is known as the Type~2 Tobit model in the econometrics literature and is often referred to as the Heckman model. When $\rho = 0$, there is no selection effect, implying that the outcomes are missing at random and that the observed sample is representative of the population given the covariates.

Under the bivariate normal assumption, the mean equation for the outcomes of the selected samples is
\begin{equation}\label{correH}
\mathbb{E}\left[Y_{1i}\mid C_i=1,\x_i,\w_i\right]=\x^{\top}_i\bbeta+\rho\sigma \lambda(\w^{\top}_i{\bgamma}),
\end{equation}
where $\lambda(a)={\phi(a)}/{\Phi(a)}$ is the inverse Mills ratio. Therefore, ignoring the selection mechanism leads to model misspecification, since the mean equation for the observed outcomes is given by $\x_i^{\top}\bbeta$ augmented with the nonlinear selection correction term $\rho\sigma \lambda(\w_i^{\top}\bgamma)$. The ML estimates of the SLn model can be calculated by Newton-Raphson iteration or the EM algorithm as discussed by \citet{zhao2020new} and \cite{lachos2021heckman}.

Ignoring censoring for the moment, suppose that we have observations on $n$ independent individuals
\begin{equation}
%\Y_1, \ldots , \Y_n \ind \mathcal{N}_2(\bmu_i,\bSigma), \label{modeleq}
\Y_i \sim \mathcal{N}_2(\bmu_i,\bSigma), \; i \in \{ 1, \ldots , n\}, \label{modeleq}
\end{equation}
where $\Y_i=(Y_{1i},Y_{2i})^{\top}$ is the bivariate response vector for sample unit $i$, $$\bmu_i=\X_{ic}\bbeta_c,\,\,\,\X_{ic}=\left(\begin{array}{cc}
\x^{\top}_i & 0 \\
0 & \w^{\top}_i \\
\end{array}\right)\in \mathbb{R}^{2 \times (p_1 + q_1)},\,\,\,\bbeta_c=\left(\begin{array}{c}
\bbeta \\
\bgamma\\
\end{array}\right)\in \mathbb{R}^{(p_1 + q_1)}$$
and the dispersion matrix $\bSigma$ depends on an 
unknown parameter vector $(\sigma,\rho)$. 
\cite{lachos2021heckman} considers the approach proposed by \citet{vaida2009fast} and \citet{Matos11} to represent the model within the framework of a censored linear model. 

Thus, for the $i$th subject, the observed data consist of $(V_{1i}, C_i)$, where $C_i = \mathbb{I}(Y_{2i} > 0)$ and $V_{1i}$ is observed if $C_i = 1$ and unobserved otherwise.

\subsection{The likelihood function}\label{Likelihood_tMLC}

To obtain the likelihood function of the SLn model, first note that if $C_i=1$, then $Y_{1i}\sim \mathcal{N}(\x^{\top}_i\bbeta,\sigma^2)$ and $Y_{2i}\mid Y_{1i}=V_{1i}\sim \mathcal{N}(\mu_{c},\sigma^2_c)$, where
\[
\mu_c=\w^{\top}_i\bgamma+\frac{\rho}{\sigma}(V_{1i}-\x^{\top}_i\bbeta),
\quad
\sigma^2_c=1-\rho^2.
\]

Thus, the contribution to the likelihood for $C_i=1$ is
\[
\phi\left(V_{1i}\mid \x^{\top}_i\bbeta,\sigma^2\right)
\Phi\left(\frac{\mu_{c}}{\sigma_c}\right).
\]

If $C_i=0$, then the contribution to the likelihood is
\[
\Phi(-\w^{\top}_i\bgamma).
\]

Therefore, the likelihood function of $\btheta=(\bbeta^{\top},\bgamma^{\top},\sigma^2,\rho)^{\top}$ is
\begin{align}
L(\btheta\mid\bV,\bC)
=
\prod_{i=1}^{n}
\left\{
\phi(V_{1i}\mid\x^{\top}_i\bbeta,\sigma^2)
\Phi\left(\frac{\mu_{c}}{\sigma_c}\right)
\right\}^{C_i}
\left\{
\Phi(-\w^{\top}_i\bgamma)
\right\}^{1-C_i}.
\label{equ8.1}
\end{align}

The log-likelihood is given by $\ell(\btheta)=\log L(\btheta\mid\bV,\bC)$. Maximum likelihood estimation can be carried out via Newton--Raphson or EM-type algorithms, as discussed in \citet{zhao2020new} and \citet{lachos2021heckman}.

\section{The multiple Heckman selection model}\label{sec:SLnMultivariate}

In many applied settings, the variable of interest is not a single outcome but a set of related outcomes. This situation arises, for instance, in multivariate longitudinal studies or multichannel biomedical experiments, where multiple outcomes are observed for each subject. Often, each outcome is subject to a different selection process, leading to data that is missing not at random (MNAR). To address this issue, we extend the classical Heckman model to accommodate multiple correlated outcomes subject to distinct selection processes.

Let \( Y_{r1i} \) denote the \( r \)th outcome for individual $i \in \{ 1, \ldots , n\}$, and \( Y_{r2i} \) the associated latent selection variable, for $r = 1, \ldots, R$. The outcome $Y_{r1i}$ is observed if and only if $C_{ri} = \mathbb{I}(Y_{r2i} > 0)$.

The model for each outcome is given by
\begin{align}
Y_{r1i} &= \mathbf{x}_{ri}^\top \boldsymbol{\beta}_r + \varepsilon_{r1i}, \label{eq:multi_HS1} \\
Y_{r2i} &= \mathbf{w}_{ri}^\top \boldsymbol{\gamma}_r + \varepsilon_{r2i}, \label{eq:multi_HS2}
\end{align}
where \( \mathbf{x}_{ri} \in \mathbb{R}^{p_r} \) and \( \mathbf{w}_{ri} \in \mathbb{R}^{q_r} \) are covariates associated with the $r$th outcome and selection mechanism, respectively. The dimension of the vectors \( \boldsymbol{\beta}_r \in \mathbb{R}^{p_r} \) and \( \boldsymbol{\gamma}_r \in \mathbb{R}^{q_r} \) can vary across outcomes. The error terms \( \varepsilon_{r1i} \) and \( \varepsilon_{r2i} \) are jointly distributed and capture the dependence between the outcome and selection equations.

\subsection{Matrix formulation} 
\label{sec:matrix_form}

The model can be expressed in matrix form by organizing the outcomes and selection variables into a matrix \( \boldsymbol{\mathcal{Y}}_i \in \mathbb{R}^{2 \times R} \), where the first row contains the outcomes \( (Y_{11i}, \dots, Y_{R1i}) \), and the second row contains the corresponding latent selection variables \( (Y_{12i}, \dots, Y_{R2i}) \). We observe $n$ such matrices:
\[
\boldsymbol{\mathcal{Y}}_i =
\begin{pmatrix}
Y_{11i} & Y_{21i} & \cdots & Y_{R1i} \\
Y_{12i} & Y_{22i} & \cdots & Y_{R2i}
\end{pmatrix}, \quad i = 1, \ldots, n.
\]

Let \( \mathbf{x}_{ri} \in \mathbb{R}^{p_r} \) and \( \mathbf{w}_{ri} \in \mathbb{R}^{q_r} \), for $r = 1, \ldots, R$, denote the covariates associated with the outcome and selection equations, respectively. Define the block-structured covariate matrix
\[
\mathbf{Z}_i =
\begin{pmatrix}
\mathbf{x}_{1i}^\top & \mathbf{0}_{1 \times q_1} & \mathbf{x}_{2i}^\top & \mathbf{0}_{1 \times q_2} & \cdots & \mathbf{x}_{Ri}^\top & \mathbf{0}_{1 \times q_R} \\
\mathbf{0}_{1 \times p_1} & \mathbf{w}_{1i}^\top & \mathbf{0}_{1 \times p_2} & \mathbf{w}_{2i}^\top & \cdots & \mathbf{0}_{1 \times p_R} & \mathbf{w}_{Ri}^\top
\end{pmatrix}
\in \mathbb{R}^{2 \times (p + q)},
\]
where $p = \sum_{r=1}^{R} p_r$ and $q = \sum_{r=1}^{R} q_r$.

Let
\begin{equation}\label{betac}
\boldsymbol{\beta}^c_r =
\begin{pmatrix}
\boldsymbol{\beta}_r \\
\boldsymbol{\gamma}_r
\end{pmatrix}
\in \mathbb{R}^{p_r + q_r}, \quad r = 1, \ldots, R,
\end{equation}
denote the vector of regression parameters associated with the $r$th outcome. The combined regression coefficients can be written in block-diagonal form as
\[
\boldsymbol{B} =
\begin{pmatrix}
\boldsymbol{\beta}^c_1 & \mathbf{0} & \cdots & \mathbf{0} \\
\mathbf{0} & \boldsymbol{\beta}^c_2 & \cdots & \mathbf{0} \\
\vdots & \vdots & \ddots & \vdots \\
\mathbf{0} & \mathbf{0} & \cdots & \boldsymbol{\beta}^c_R
\end{pmatrix}
\in \mathbb{R}^{(p+q) \times R}.
\]

The model can then be written as
\[
\boldsymbol{\mathcal{Y}}_i = \mathbf{Z}_i \boldsymbol{B} + \boldsymbol{\mathcal{E}}_i,
\]
where \( \boldsymbol{\mathcal{E}}_i \in \mathbb{R}^{2 \times R} \) is an error matrix assumed to follow a matrix-variate normal distribution:
\begin{equation}\label{errorM}
\boldsymbol{\mathcal{E}}_i \sim \mathcal{N}_{2 \times R}(\mathbf{0}, \boldsymbol{\Sigma}, \boldsymbol{\Psi}).
\end{equation}
The matrix \( \boldsymbol{\Sigma} \in \mathbb{R}^{2 \times 2} \) captures the dependence between the outcome and selection equations, while \( \boldsymbol{\Psi} \in \mathbb{R}^{R \times R} \) models the dependence across outcomes. The model defined by \eqref{eq:multi_HS1}--\eqref{eq:multi_HS2} and \eqref{errorM} is referred to as the multiple Heckman selection model with normal errors (MSLn hereafter).

For identifiability, the covariance matrix \( \boldsymbol{\Sigma} \) is parameterized as
\[
\boldsymbol{\Sigma} =
\begin{pmatrix}
\sigma^2 & \rho \sigma \\
\rho \sigma & 1
\end{pmatrix},
\]
where the variance of the selection error is fixed to one.

Ignoring the selection mechanism for the moment, the $n$ independent matrices satisfy
\[
\boldsymbol{\mathcal{Y}}_i \sim \mathcal{N}_{2 \times R}(\mathbf{M}_i = \mathbf{Z}_i \boldsymbol{B}, \boldsymbol{\Sigma}, \boldsymbol{\Psi}), \quad i = 1, \ldots, n.
\]

This matrix-variate formulation provides a convenient framework for modeling multiple outcomes under non-random selection, allowing for a unified treatment of dependence both within and across equations. In practice, the outcome row of the response matrix is observed only when selection occurs (i.e., when \( C_{ri} = 1 \)); otherwise, it is unobserved and treated as missing. The next section formalizes this mechanism by representing the observed data as a censored version of the underlying latent response matrix, following the recent development by \cite{lachos2025}.

\subsection{Censoring structure in the MSLn model}

Let \( \widetilde{\boldsymbol{\mathcal{Y}}} = \{ \boldsymbol{\mathcal{Y}}_1, \ldots, 
\boldsymbol{\mathcal{Y}}_n \} \) denote a sample of latent matrices, where each 
\( \boldsymbol{\mathcal{Y}}_i \in \mathbb{R}^{2 \times R} \). The first row of 
\( \boldsymbol{\mathcal{Y}}_i \) contains the outcome variables, while the second row 
corresponds to latent selection variables associated with the \( R \) responses.

Following \citet{lachos2021heckman}, the observed data for each individual \( i \) is 
represented by the pair \( (\boldsymbol{\mathcal{V}}_i, \mathbf{C}_i) \), where 
\( \boldsymbol{\mathcal{V}}_i \in \mathbb{R}^{2 \times R} \) is a censored version of 
\( \boldsymbol{\mathcal{Y}}_i \), and \( \mathbf{C}_i = (C_{1i}, \ldots, C_{Ri})^{\top} \), 
with \( C_{ri} \in \{0,1\} \) indicating whether the $r$th outcome is observed 
(\( C_{ri} = 1 \)) or unobserved (\( C_{ri} = 0 \)). Formally,
\[
C_{ri} =
\begin{cases}
1, & \text{if } Y_{r2i} > 0, \\[0.5ex]
0, & \text{if } Y_{r2i} \leq 0,
\end{cases}
\]
and \( \boldsymbol{\mathcal{V}}_i \) encodes the observed and censored components of 
\( \boldsymbol{\mathcal{Y}}_i \), with \( \mathcal{V}_{r1i} = Y_{r1i} \) when 
\( C_{ri} = 1 \), and unobserved otherwise.

Under this formulation, the selection variables \( Y_{r2i} \) are subject to truncation 
at zero: they are left-truncated when \( C_{ri} = 1 \) and right-truncated when 
\( C_{ri} = 0 \). In our formulation, the values of \( Y_{r2i} \) are always censored, 
and the unobserved components of $Y_{r1i}$ are treated as censored over the entire real 
line and are handled through integration in the likelihood. This representation allows 
the observed data to be expressed in a unified way, facilitating likelihood-based 
inference via partitioning of the vectorized responses.

{ To characterize the conditional distribution of the observed outcomes under the MSLn 
model (see Section~\ref{sec:matrix_form}), we work with the row-wise vectorization
\[
\mathbf{y}_i = \operatorname{vec}(\boldsymbol{\mathcal{Y}}_i^\top) \in \mathbb{R}^{2R},
\]
so that the outcome and selection components form contiguous blocks of length~$R$, 
and the joint covariance takes the form $\boldsymbol{\Omega} = \boldsymbol{\Sigma} 
\otimes \boldsymbol{\Psi}$. Note that this convention differs from the column-wise 
vectorization $\boldsymbol{\Psi} \otimes \boldsymbol{\Sigma}$ adopted in some 
references only by a permutation of rows and columns induced by a commutation matrix; 
the two representations are algebraically equivalent and yield identical model 
likelihoods. }

We introduce the following additional notation. For individual $i$, let
\[
\mathcal{S}_i = \{\, r \in \{1,\ldots,R\} : C_{ri} = 1 \,\}, 
\qquad R_i^* = |\mathcal{S}_i|,
\]
denote the index set of observed outcomes and its cardinality, respectively. 
Let $\mathbf{y}_{1i}^{\mathrm{obs}}$ and $\mathbf{y}_{2i}^{\mathrm{obs}}$ 
denote the subvectors of $\mathbf{y}_i$ corresponding to outcomes and selection 
variables, restricted to $\mathcal{S}_i$, with corresponding mean vectors
\[
\boldsymbol{\mu}_{1i}^{\mathrm{obs}} 
= \bigl(\mathbf{x}_{ri}^\top\boldsymbol{\beta}_r\bigr)_{r\in\mathcal{S}_i},
\qquad
\boldsymbol{\mu}_{2i}^{\mathrm{obs}} 
= \bigl(\mathbf{w}_{ri}^\top\boldsymbol{\gamma}_r\bigr)_{r\in\mathcal{S}_i},
\]
and let $\boldsymbol{\Psi}^{\mathrm{obs}}_i$ denote the submatrix of $\boldsymbol{\Psi}$ 
indexed by $\mathcal{S}_i$. Conditionally on $\mathbf{C}_i$, the selection subvector 
$\mathbf{y}_{2i}^{\mathrm{obs}}$ is truncated to the positive orthant
\[
\mathcal{A}_i^{\mathrm{obs}}
=
\bigl\{\mathbf{y}_2\in\mathbb{R}^{R_i^*} : y_{2,r}>0,\; r\in\mathcal{S}_i\bigr\}.
\]

The following proposition shows a link between the continuous part
of the MSLn and the SUN distribution.

{\begin{proposition}[SUN representation of the observed outcomes]
\label{prop:SUN}
Under the MSLn model, the conditional distribution of
$\mathbf{y}_{1i}^{\mathrm{obs}} \mid (\mathbf{C}_i, \mathbf{Z}_i)$
belongs to the unified skew-normal (SUN) family
\citep{arellano2006unification}, with parameters
\[
\boldsymbol{\xi}_i = \boldsymbol{\mu}_{1i}^{\mathrm{obs}},
\quad
\boldsymbol{\Omega}_i = \rho\sigma\,\boldsymbol{\Psi}_i^{\mathrm{obs}},
\quad
\boldsymbol{\Delta}_i = (\boldsymbol{\Psi}_i^{\mathrm{obs}})^{1/2},
\quad
\boldsymbol{\tau}_i = \boldsymbol{\mu}_{2i}^{\mathrm{obs}},
\quad
\boldsymbol{\Gamma}_i = \boldsymbol{\Psi}_i^{\mathrm{obs}},
\]
i.e.,
\[
\mathbf{y}_{1i}^{\mathrm{obs}} \mid (\mathbf{C}_i, \mathbf{Z}_i)
\sim 
\mathrm{SUN}_{R_i^\ast, R_i^\ast}(
\boldsymbol{\xi}_i, \boldsymbol{\Omega}_i, \boldsymbol{\Delta}_i,
\boldsymbol{\tau}_i, \boldsymbol{\Gamma}_i),
\]
\end{proposition}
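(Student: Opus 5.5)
The approach is the selection (hidden-truncation) representation of the SUN family \citep{arellano2006unification}. If $(\mathbf{U}_0,\mathbf{U}_1)$ is jointly Gaussian, then $\mathbf{U}_1 \mid \{\mathbf{U}_0+\boldsymbol{\tau}>\mathbf{0}\}$ is SUN, with parameters read off from the joint mean and covariance. So I would first write down the joint Gaussian law of the relevant block of $\mathbf{y}_i=\operatorname{vec}(\boldsymbol{\mathcal{Y}}_i^\top)$, and then identify the conditioning event $\mathbf{C}_i$ with a truncation of the selection block to $\mathcal{A}_i^{\mathrm{obs}}$.

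\emph{Step 1 (joint law of the observed blocks).} By \eqref{eq:iff} and the row-wise convention, $\mathbf{y}_i\sim\mathcal{N}_{2R}(\cdot,\boldsymbol{\Sigma}\otimes\boldsymbol{\Psi})$. Selecting the coordinates indexed by $\mathcal{S}_i$ in both rows (a linear map) gives
\[
\begin{pmatrix}\mathbf{y}_{1i}^{\mathrm{obs}}\\ \mathbf{y}_{2i}^{\mathrm{obs}}\end{pmatrix}
\sim\mathcal{N}_{2R_i^*}\!\left(\begin{pmatrix}\boldsymbol{\mu}_{1i}^{\mathrm{obs}}\\ \boldsymbol{\mu}_{2i}^{\mathrm{obs}}\end{pmatrix},\ \boldsymbol{\Sigma}\otimes\boldsymbol{\Psi}_i^{\mathrm{obs}}\right).
\]
The blocks are therefore $\sigma^2\boldsymbol{\Psi}_i^{\mathrm{obs}}$ for the outcomes, $\rho\sigma\boldsymbol{\Psi}_i^{\mathrm{obs}}$ for the cross-covariance, and $\boldsymbol{\Psi}_i^{\mathrm{obs}}$ for the selection variables (using $\Sigma_{22}=1$). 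The unobserved outcomes $Y_{r1i}$, $r\notin\mathcal{S}_i$, are integrated over $\mathbb{R}$, which simply drops them, consistent with the censoring-over-the-whole-line convention.

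\emph{Step 2 (conditional density).} Conditioning on $\mathbf{y}_{2i}^{\mathrm{obs}}\in\mathcal{A}_i^{\mathrm{obs}}$ and using Bayes' rule gives the density
\[
f(\mathbf{y}_1)=\phi_{R_i^*}\bigl(\mathbf{y}_1\mid\boldsymbol{\mu}_{1i}^{\mathrm{obs}},\sigma^2\boldsymbol{\Psi}_i^{\mathrm{obs}}\bigr)\,
\frac{\Phi_{R_i^*}\bigl(\boldsymbol{\mu}_{2i}^{\mathrm{obs}}+\tfrac{\rho}{\sigma}(\mathbf{y}_1-\boldsymbol{\mu}_{1i}^{\mathrm{obs}})\mid \mathbf{0},(1-\rho^2)\boldsymbol{\Psi}_i^{\mathrm{obs}}\bigr)}{\Phi_{R_i^*}\bigl(\boldsymbol{\mu}_{2i}^{\mathrm{obs}}\mid\mathbf{0},\boldsymbol{\Psi}_i^{\mathrm{obs}}\bigr)}.
\]
This uses the standard Gaussian conditioning formulas and the symmetry $\Pr(\mathbf{Y}>\mathbf{0})=\Phi(\boldsymbol{\mu})$. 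It is the exact matrix analogue of the univariate contribution in \eqref{equ8.1}.

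\emph{Step 3 (matching parameters).} Compare this with the SUN density
\[
\phi(\mathbf{y}-\boldsymbol{\xi};\boldsymbol{\Omega})\,
\Phi\bigl(\boldsymbol{\tau}+\boldsymbol{\Delta}^\top\bar{\boldsymbol{\Omega}}^{-1}\boldsymbol{\omega}^{-1}(\mathbf{y}-\boldsymbol{\xi});\,\boldsymbol{\Gamma}-\boldsymbol{\Delta}^\top\bar{\boldsymbol{\Omega}}^{-1}\boldsymbol{\Delta}\bigr)\big/\Phi(\boldsymbol{\tau};\boldsymbol{\Gamma}).
\]
Immediately $\boldsymbol{\xi}_i=\boldsymbol{\mu}_{1i}^{\mathrm{obs}}$, $\boldsymbol{\tau}_i=\boldsymbol{\mu}_{2i}^{\mathrm{obs}}$ and $\boldsymbol{\Gamma}_i=\boldsymbol{\Psi}_i^{\mathrm{obs}}$. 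The scale and skewness parameters are then obtained by solving two equations: the skewing argument must equal $\tfrac{\rho}{\sigma}(\mathbf{y}_1-\boldsymbol{\mu}_{1i}^{\mathrm{obs}})$, and the latent covariance must equal $(1-\rho^2)\boldsymbol{\Psi}_i^{\mathrm{obs}}$. Because every block is a scalar multiple of $\boldsymbol{\Psi}_i^{\mathrm{obs}}$, the matrices commute. Both equations should then reduce to scalar identities once $\boldsymbol{\Delta}_i$ is taken as a matrix square root of $\boldsymbol{\Psi}_i^{\mathrm{obs}}$ and the scale constant $\rho\sigma$ (versus $\sigma^2$) is absorbed according to the SUN parametrization convention in use. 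I would carry out this bookkeeping explicitly, in the non-normalized form $(\boldsymbol{\Omega},\boldsymbol{\Delta})$ with the cross-covariance $\rho\sigma\boldsymbol{\Psi}_i^{\mathrm{obs}}$, and reconcile it with the listed $\boldsymbol{\Omega}_i$ and $\boldsymbol{\Delta}_i$.

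\emph{Main obstacle.} The event $\mathbf{C}_i$ also contains $\{Y_{r2i}\le 0,\ r\notin\mathcal{S}_i\}$. These latent selection variables are correlated with the observed block through $\boldsymbol{\Psi}$, so conditioning on them adds further truncation dimensions. The first thing I would try is to make explicit that the proposition concerns the continuous part conditional on the observed selection event $\mathbf{y}_{2i}^{\mathrm{obs}}\in\mathcal{A}_i^{\mathrm{obs}}$, exactly as the set $\mathcal{A}_i^{\mathrm{obs}}$ is defined. If instead the full event is intended, the same argument gives a SUN of dimension $R_i^*\times R$: one stacks the signs so that all constraints read $>0$, takes $\boldsymbol{\Gamma}$ to be the correspondingly sign-flipped $\boldsymbol{\Psi}$, and takes $\boldsymbol{\Delta}$ to be the cross-covariance of $\mathbf{y}_{1i}^{\mathrm{obs}}$ with all $R$ selection variables. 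The stated form is then the special case in which no outcomes are missing or $\boldsymbol{\Psi}$ is block diagonal.
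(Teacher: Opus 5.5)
Your Steps 1 and 2 are correct and follow the same route as the paper. The paper also restricts to the observed blocks via the selection matrix $\mathbf{S}_i$, but it never writes the conditional density. It simply asserts $\{\mathbf{C}_i\}=\{\mathbf{y}_{2i}^{\mathrm{obs}}\in\mathcal{A}_i^{\mathrm{obs}}\}$ and reads off the five parameters by ``comparison'' with Eq.~(8) of \citet{arellano2006unification}.

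The genuine gap is your Step 3, and it cannot be closed: the bookkeeping you defer does not lead to the listed $\boldsymbol{\Omega}_i$ and $\boldsymbol{\Delta}_i$. In every SUN parametrization the Gaussian factor of the density is $\phi_d(\mathbf{y}-\boldsymbol{\xi};\boldsymbol{\Omega})$. Your Step 2 shows that this factor is $\phi_{R_i^*}(\cdot\mid\boldsymbol{\mu}_{1i}^{\mathrm{obs}},\sigma^2\boldsymbol{\Psi}_i^{\mathrm{obs}})$, and no convention ``absorbs'' $\rho\sigma$ versus $\sigma^2$.

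\emph{Unnormalized form.} Take the skewing argument $\boldsymbol{\tau}+\boldsymbol{\Delta}^{\top}\boldsymbol{\Omega}^{-1}(\mathbf{y}-\boldsymbol{\xi})$ with $\boldsymbol{\tau}_i=\boldsymbol{\mu}_{2i}^{\mathrm{obs}}$ and $\boldsymbol{\Gamma}_i=\boldsymbol{\Psi}_i^{\mathrm{obs}}$. Matching forces $\boldsymbol{\Omega}_i=\sigma^2\boldsymbol{\Psi}_i^{\mathrm{obs}}$. Then $\boldsymbol{\Delta}^{\top}\boldsymbol{\Omega}_i^{-1}=(\rho/\sigma)\mathbf{I}$ forces $\boldsymbol{\Delta}=\rho\sigma\boldsymbol{\Psi}_i^{\mathrm{obs}}$, after which $\boldsymbol{\Gamma}_i-\boldsymbol{\Delta}^{\top}\boldsymbol{\Omega}_i^{-1}\boldsymbol{\Delta}=(1-\rho^2)\boldsymbol{\Psi}_i^{\mathrm{obs}}$, as required.

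\emph{Normalized form.} In the form you actually wrote, $\boldsymbol{\Gamma}$ must be a correlation matrix. Let $\mathbf{D}_i=\operatorname{diag}(\boldsymbol{\Psi}_i^{\mathrm{obs}})^{1/2}$ and $\bar{\boldsymbol{\Psi}}_i^{\mathrm{obs}}=\mathbf{D}_i^{-1}\boldsymbol{\Psi}_i^{\mathrm{obs}}\mathbf{D}_i^{-1}$. Matching then gives $\boldsymbol{\Omega}_i=\sigma^2\boldsymbol{\Psi}_i^{\mathrm{obs}}$, $\boldsymbol{\Delta}_i=\rho\,\bar{\boldsymbol{\Psi}}_i^{\mathrm{obs}}$, $\boldsymbol{\tau}_i=\mathbf{D}_i^{-1}\boldsymbol{\mu}_{2i}^{\mathrm{obs}}$ and $\boldsymbol{\Gamma}_i=\bar{\boldsymbol{\Psi}}_i^{\mathrm{obs}}$.

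Your computation therefore refutes, rather than proves, $\boldsymbol{\Omega}_i=\rho\sigma\boldsymbol{\Psi}_i^{\mathrm{obs}}$ and $\boldsymbol{\Delta}_i=(\boldsymbol{\Psi}_i^{\mathrm{obs}})^{1/2}$. Three convention-free checks confirm that the statement is false as written. First, $\rho\sigma\boldsymbol{\Psi}_i^{\mathrm{obs}}$ is not positive definite for $\rho\le 0$. Second, at $\rho=0$ the outcomes are independent of every selection variable, so the conditional law is $\mathcal{N}_{R_i^*}(\boldsymbol{\mu}_{1i}^{\mathrm{obs}},\sigma^2\boldsymbol{\Psi}_i^{\mathrm{obs}})$; yet the listed $\boldsymbol{\Omega}_i$ vanishes while the listed $\boldsymbol{\Delta}_i$ does not. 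Third, for $R=1$ and $\Psi=1$ the listed parameters make the latent correlation matrix $\bigl(\begin{smallmatrix}1&1\\1&1\end{smallmatrix}\bigr)$ singular. The law then collapses to $\xi_i+\sqrt{\rho\sigma}$ times a normal truncated from below, whereas $Y_{1i}\mid C_i=1$ has full support.

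Your ``main obstacle'' is also genuine, and the paper's proof passes over it. The event $\{\mathbf{C}_i\}$ also contains $\{Y_{r2i}\le 0,\ r\notin\mathcal{S}_i\}$, which is correlated with the observed block through $\boldsymbol{\Psi}$. Conditioning on $\{\mathbf{C}_i\}$ and on $\{\mathbf{y}_{2i}^{\mathrm{obs}}\in\mathcal{A}_i^{\mathrm{obs}}\}$ therefore agree in special cases, such as all outcomes observed or no correlation in $\boldsymbol{\Psi}$ between $\mathcal{S}_i$ and its complement, but not in general.

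Your $R_i^*\times R$ construction is the right fix. Let $\mathbf{E}_i$ be the diagonal sign matrix with entries $2C_{ri}-1$ and $\mathbf{S}_i$ the $R_i^*\times R$ selection matrix. Then $\boldsymbol{\tau}=\mathbf{E}_i\boldsymbol{\mu}_{2i}$, $\boldsymbol{\Gamma}=\mathbf{E}_i\boldsymbol{\Psi}\mathbf{E}_i$, and the cross-covariance with the sign-flipped selection vector is $\rho\sigma\,\mathbf{S}_i\boldsymbol{\Psi}\mathbf{E}_i$. Again $\boldsymbol{\Omega}=\sigma^2\boldsymbol{\Psi}_i^{\mathrm{obs}}$, so in the special cases it reduces to the corrected parameters above, not to the stated ones.

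In short, your Steps 1--2 establish a SUN representation, but for a corrected proposition. The version with the listed $\boldsymbol{\Omega}_i$ and $\boldsymbol{\Delta}_i$ cannot be proved.
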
      }

\begin{proof}
See Appendix for details.
\end{proof}

{\begin{corollary}[Conditional expectation of observed outcomes]
\label{cor:conditional_expectation_MSLn}
Under the MSLn model, the conditional expectation of the observed outcomes
satisfies
\[
\mathbb{E}\!\left(
\mathbf{y}_{1i}^{\mathrm{obs}}
\mid \mathbf{C}_i,\,\mathbf{Z}_i
\right)
=
\boldsymbol{\mu}_{1i}^{\mathrm{obs}}
+
(\rho\sigma)\,
\boldsymbol{\delta}_i^{\mathrm{obs}}(\mathbf{C}_i),
\]
where
\[
\boldsymbol{\delta}_i^{\mathrm{obs}}(\mathbf{C}_i)
=
\mathbb{E}\!\left(
\mathbf{y}_{2i}^{\mathrm{obs}} - \boldsymbol{\mu}_{2i}^{\mathrm{obs}}
\;\middle|\;
\mathbf{C}_i,\,\mathbf{Z}_i
\right)
\in \mathbb{R}^{R_i^*}
\]
is the truncated-normal mean correction, given explicitly by
\[
\boldsymbol{\delta}_i^{\mathrm{obs}}(\mathbf{C}_i)
=
(\boldsymbol{\Psi}_i^{\mathrm{obs}})^{1/2}\,
\frac{
\phi_{R_i^\ast}\!\left(
(\boldsymbol{\Psi}_i^{\mathrm{obs}})^{-1/2}
\boldsymbol{\mu}_{2i}^{\mathrm{obs}}
\right)
}{
\Phi_{R_i^\ast}\!\left(
(\boldsymbol{\Psi}_i^{\mathrm{obs}})^{-1/2}
\boldsymbol{\mu}_{2i}^{\mathrm{obs}}
\right)
},
\]
which extends the classical inverse Mills ratio to the matrix-variate 
setting.
\end{corollary}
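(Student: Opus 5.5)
The plan is to obtain the corollary from the conditional structure of the Gaussian model that underlies Proposition~\ref{prop:SUN}, using the law of iterated expectations rather than differentiating the SUN moment generating function.

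\textbf{Step 1: joint law of the observed block.} Under the row-wise vectorization, $\mathbf{y}_i$ has covariance $\boldsymbol{\Sigma}\otimes\boldsymbol{\Psi}$. Restricting to the indices in $\mathcal{S}_i$, the pair $(\mathbf{y}_{1i}^{\mathrm{obs}},\mathbf{y}_{2i}^{\mathrm{obs}})$ is jointly normal with
\[
\operatorname{Var}(\mathbf{y}_{1i}^{\mathrm{obs}})=\sigma^2\boldsymbol{\Psi}_i^{\mathrm{obs}},\quad
\operatorname{Var}(\mathbf{y}_{2i}^{\mathrm{obs}})=\boldsymbol{\Psi}_i^{\mathrm{obs}},\quad
\operatorname{Cov}(\mathbf{y}_{1i}^{\mathrm{obs}},\mathbf{y}_{2i}^{\mathrm{obs}})=\rho\sigma\,\boldsymbol{\Psi}_i^{\mathrm{obs}}.
\]
Hence the regression of the outcomes on the selection variables is
\[
\mathbb{E}(\mathbf{y}_{1i}^{\mathrm{obs}}\mid \mathbf{y}_{2i}^{\mathrm{obs}})=\boldsymbol{\mu}_{1i}^{\mathrm{obs}}+\rho\sigma\,(\mathbf{y}_{2i}^{\mathrm{obs}}-\boldsymbol{\mu}_{2i}^{\mathrm{obs}}).
\]
The Kronecker structure makes the regression coefficient matrix the scalar $\rho\sigma$ times the identity, since $\rho\sigma\,\boldsymbol{\Psi}_i^{\mathrm{obs}}(\boldsymbol{\Psi}_i^{\mathrm{obs}})^{-1}=\rho\sigma\,\mathbf{I}$.

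\textbf{Step 2: conditioning on the selection pattern.} The event $\mathbf{C}_i$ is determined by $\mathbf{y}_{2i}$. I will take the conditioning, as in Proposition~\ref{prop:SUN}, to be on the event $\mathbf{y}_{2i}^{\mathrm{obs}}\in\mathcal{A}_i^{\mathrm{obs}}$. Conditional on $\mathbf{y}_{2i}^{\mathrm{obs}}$, the outcome block then carries no further information from this event. Taking the expectation of the Step~1 identity over the truncated law of $\mathbf{y}_{2i}^{\mathrm{obs}}$ gives the displayed formula, with
\[
\boldsymbol{\delta}_i^{\mathrm{obs}}=\mathbb{E}\!\left(\mathbf{y}_{2i}^{\mathrm{obs}}-\boldsymbol{\mu}_{2i}^{\mathrm{obs}}\;\middle|\;\mathbf{y}_{2i}^{\mathrm{obs}}\in\mathcal{A}_i^{\mathrm{obs}}\right).
\]
Equivalently, one could differentiate the SUN moment generating function at zero using the parameters of Proposition~\ref{prop:SUN}; the result is the same.

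\textbf{Step 3: explicit form of $\boldsymbol{\delta}_i^{\mathrm{obs}}$.} This step is the main obstacle. For a $\mathcal{N}(\boldsymbol{\mu},\boldsymbol{\Psi})$ vector truncated to the positive orthant, the mean correction has the Tallis form. Writing $U=\mathbf{y}_{2i}^{\mathrm{obs}}-\boldsymbol{\mu}_{2i}^{\mathrm{obs}}$, the $r$-th component of $\mathbb{E}(U\mid U>-\boldsymbol{\mu}_{2i}^{\mathrm{obs}})$ is
\[
\frac{1}{\Phi_{R_i^\ast}(\boldsymbol{\mu}_{2i}^{\mathrm{obs}}\mid\mathbf{0},\boldsymbol{\Psi}_i^{\mathrm{obs}})}\sum_{s}\Psi_{rs}\,\phi(\mu_s\mid 0,\Psi_{ss})\,\Phi_{R_i^\ast-1}(\cdot),
\]
where the $\Phi_{R_i^\ast-1}$ factors are lower-dimensional conditional normal cdfs. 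I would first derive this via the gradient identity
\[
\mathbb{E}(U\mathbb{I}_A)=\boldsymbol{\Psi}\,\nabla_{\boldsymbol{\mu}}\Phi_{R_i^\ast}(\boldsymbol{\mu}\mid \mathbf{0},\boldsymbol{\Psi}),
\]
obtained by differentiating under the integral. I would then show that it reduces to the stated compact expression with $(\boldsymbol{\Psi}_i^{\mathrm{obs}})^{1/2}$ and the ratio $\phi_{R_i^\ast}/\Phi_{R_i^\ast}$, interpreted componentwise as a multivariate inverse Mills ratio.

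The compact expression matches exactly when $R_i^\ast=1$ (the classical $\rho\sigma\lambda(\mathbf{w}^\top\boldsymbol{\gamma})$ of \eqref{correH}) or when $\boldsymbol{\Psi}_i^{\mathrm{obs}}$ is diagonal. In general, the identification of the Tallis gradient with the stated form holds only under this notational convention, so the proof should state that convention explicitly. This reconciliation is where I expect the real work to lie.
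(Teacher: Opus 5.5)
The real gap is your Step~3, and it cannot be closed, because the explicit formula is false whenever $\boldsymbol{\Psi}_i^{\mathrm{obs}}$ is not diagonal. Your Steps 1--2 are correct, and they are a sounder route to the first display than the paper's.

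\textbf{Steps 1--2.} The paper plugs Proposition~\ref{prop:SUN}'s parameters into a quoted SUN first-moment formula. In doing so it simplifies $(\rho\sigma\boldsymbol{\Psi}_i^{\mathrm{obs}})(\boldsymbol{\Psi}_i^{\mathrm{obs}})^{1/2}(\boldsymbol{\Psi}_i^{\mathrm{obs}})^{-1/2}$ to $\rho\sigma(\boldsymbol{\Psi}_i^{\mathrm{obs}})^{1/2}$, which is wrong; your regression-plus-tower argument avoids all of this. It even handles the literal event $\mathbf{C}_i$, which also imposes $Y_{r2i}\le 0$ for $r\notin\mathcal{S}_i$. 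The regression of $\mathbf{y}_{1i}$ on the full $\mathbf{y}_{2i}$ has coefficient $\rho\sigma\boldsymbol{\Psi}\boldsymbol{\Psi}^{-1}=\rho\sigma\mathbf{I}_R$, so $\mathbb{E}(\mathbf{y}_{1i}^{\mathrm{obs}}\mid\mathbf{y}_{2i})$ depends on $\mathbf{y}_{2i}^{\mathrm{obs}}$ alone. Drop the aside that the SUN mgf with Proposition~\ref{prop:SUN}'s parameters gives the same result. The SUN mean is $\boldsymbol{\xi}$ plus a matrix times $\nabla\Phi_m(\boldsymbol{\tau};\boldsymbol{\Gamma})/\Phi_m(\boldsymbol{\tau};\boldsymbol{\Gamma})$, and with those parameters that matrix is not $\rho\sigma\boldsymbol{\Psi}_i^{\mathrm{obs}}$. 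You do not need it anyway.

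\textbf{Step 3: a counterexample.} Take $R_i^\ast=2$, $\boldsymbol{\mu}_{2i}^{\mathrm{obs}}=\mathbf{0}$, unit variances and correlation $1/2$. Your gradient identity gives each component of $\boldsymbol{\delta}_i^{\mathrm{obs}}$ as $\tfrac34\phi(0)\big/\bigl(\tfrac14+\tfrac{1}{2\pi}\arcsin\tfrac12\bigr)=\tfrac94\phi(0)\approx 0.898$. The compact formula, read componentwise as the paper intends, gives $2\phi(0)(\boldsymbol{\Psi}_i^{\mathrm{obs}})^{1/2}\mathbf{1}=2\sqrt{3/2}\,\phi(0)\mathbf{1}\approx 0.977\,\mathbf{1}$. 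The only other reading, joint density over joint cdf, is a scalar, and $(\boldsymbol{\Psi}_i^{\mathrm{obs}})^{1/2}$ times a scalar is a matrix, not a vector. So there is no ``notational convention'' to appeal to, and the ``reconciliation'' you anticipate does not exist.

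\textbf{Where the paper's proof fails.} The paper reaches the compact form by whitening, $\mathbf{u}=(\boldsymbol{\Psi}_i^{\mathrm{obs}})^{-1/2}(\mathbf{y}_{2i}^{\mathrm{obs}}-\boldsymbol{\mu}_{2i}^{\mathrm{obs}})$. It then asserts that the truncation event becomes $\{\mathbf{u}>-(\boldsymbol{\Psi}_i^{\mathrm{obs}})^{-1/2}\boldsymbol{\mu}_{2i}^{\mathrm{obs}}\}$. The event is actually $\{(\boldsymbol{\Psi}_i^{\mathrm{obs}})^{1/2}\mathbf{u}>-\boldsymbol{\mu}_{2i}^{\mathrm{obs}}\}$, which is a translated orthant in $\mathbf{u}$ only when $(\boldsymbol{\Psi}_i^{\mathrm{obs}})^{1/2}$ is diagonal. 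Its componentwise inverse-Mills step is therefore valid exactly in the case you already isolated.

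\textbf{What to prove instead.} Prove the compact formula under the hypothesis that $\boldsymbol{\Psi}_i^{\mathrm{obs}}$ is diagonal, which covers $R_i^\ast=1$. For the literal $\mathbf{C}_i$ you also need zero covariance between the selected and unselected selection errors. In general, state what your gradient identity already proves, with $\mathbf{C}_i$ read as $\{\mathbf{y}_{2i}^{\mathrm{obs}}\in\mathcal{A}_i^{\mathrm{obs}}\}$:
\[
\boldsymbol{\delta}_i^{\mathrm{obs}}(\mathbf{C}_i)=\boldsymbol{\Psi}_i^{\mathrm{obs}}\,\frac{\nabla_{\boldsymbol{\mu}}\Phi_{R_i^\ast}(\boldsymbol{\mu}\mid\mathbf{0},\boldsymbol{\Psi}_i^{\mathrm{obs}})}{\Phi_{R_i^\ast}(\boldsymbol{\mu}\mid\mathbf{0},\boldsymbol{\Psi}_i^{\mathrm{obs}})}\bigg|_{\boldsymbol{\mu}=\boldsymbol{\mu}_{2i}^{\mathrm{obs}}}.
\]
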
  }
\begin{proof}
See Appendix for details.
\end{proof}

{\begin{remark}
As in the classical Heckman model (see Eq.~(\ref{correH})), the conditional 
expectation departs from the marginal mean $\boldsymbol{\mu}_{1i}^{\mathrm{obs}}$ 
by a correction term that reflects the bias induced by nonrandom selection. 
The vector $\boldsymbol{\delta}_i^{\mathrm{obs}}(\mathbf{C}_i)$ extends the 
classical inverse Mills ratio to the matrix-variate setting, where multiple 
outcomes are jointly selected.
\end{remark}
}

{\begin{remark}
The MSLn model nests the classical Heckman selection model as a special 
case. Setting $R = 1$ and $C_i = 1$, so that $R_i^\ast = 1$ and the 
conditional distribution of Proposition~\ref{prop:SUN} reduces to 
$\mathrm{SUN}_{1,1}$, all matrix-variate quantities reduce to scalars 
with $\boldsymbol{\Psi}_i^{\mathrm{obs}} = 1$, since the variance of the 
selection error $\epsilon_{2i}$ is fixed to 1 for identifiability (see 
Eq.~\eqref{nerror} and the discussion therein),
$\boldsymbol{\mu}_{1i}^{\mathrm{obs}} = \mathbf{x}_i^\top\boldsymbol{\beta}$, 
and $\boldsymbol{\mu}_{2i}^{\mathrm{obs}} = \mathbf{w}_i^\top\boldsymbol{\gamma}$.
Moreover, $\mathbf{C}_i$ reduces to the scalar event $\{C_i = 1\}$,
$\mathbf{Z}_i$ contains $(\mathbf{x}_i, \mathbf{w}_i)$, and 
$\mathbf{y}_{1i}^{\mathrm{obs}}$ reduces to the scalar $Y_{1i}$,
so the conditioning in the Corollary becomes $\mid C_i = 1, \mathbf{x}_i, 
\mathbf{w}_i$. Substituting $\boldsymbol{\Psi}_i^{\mathrm{obs}} = 1$ 
directly into the explicit formula for 
$\boldsymbol{\delta}_i^{\mathrm{obs}}(\mathbf{C}_i)$ in the Corollary gives
\[
\boldsymbol{\delta}_i^{\mathrm{obs}}(\mathbf{C}_i)
=
\frac{\phi(\mathbf{w}_i^\top\boldsymbol{\gamma})}
{\Phi(\mathbf{w}_i^\top\boldsymbol{\gamma})}
=
\lambda(\mathbf{w}_i^\top\boldsymbol{\gamma}),
\]
and the conditional expectation of the Corollary reduces exactly to 
Eq.~(\ref{correH}), i.e.,
\[
\mathbb{E}\!\left(
\mathbf{y}_{1i}^{\mathrm{obs}}
\mid C_i = 1, \mathbf{x}_i, \mathbf{w}_i
\right)
=
\mathbb{E}\!\left(
Y_{1i} \mid C_i = 1, \mathbf{x}_i, \mathbf{w}_i
\right)
=
\mathbf{x}_i^\top\boldsymbol{\beta}
+
\rho\sigma\,\lambda(\mathbf{w}_i^\top\boldsymbol{\gamma}),
\]
confirming that the MSLn model nests the classical Heckman selection 
model as a special case.
\end{remark}
}

\subsection{Likelihood function}\label{sec:likelihood}

For computational convenience, we use the vectorized representation of the matrix-variate normal distribution given in \eqref{eq:iff}. Let
\[
\mathbf{y}_i = \mathrm{vec}(\boldsymbol{\mathcal{Y}}_i) \in \mathbb{R}^{2R},
\qquad
\mathbf{v}_i = \mathrm{vec}(\boldsymbol{\mathcal{V}}_i) \in \mathbb{R}^{2R},
\]
and define
\[
\mathrm{vec}(\widetilde{\boldsymbol{\mathcal{Y}}}) = \{\mathbf{y}_1,\ldots,\mathbf{y}_n\},
\qquad
\mathrm{vec}(\widetilde{\boldsymbol{\mathcal{V}}}) = \{\mathbf{v}_1,\ldots,\mathbf{v}_n\},
\qquad
\boldsymbol{C} = \{\boldsymbol{C}_1,\ldots,\boldsymbol{C}_n\}.
\]

For each \(i\), partition \( \mathbf{y}_i \) into observed and censored components as
\[
\mathbf{y}_i =
\begin{pmatrix}
\mathbf{y}_i^o \\
\mathbf{y}_i^c
\end{pmatrix},
\]
where \( \mathbf{y}_i^o \in \mathbb{R}^{p_i^o} \) contains the observed entries and \( \mathbf{y}_i^c \in \mathbb{R}^{p_i^c} \) contains the censored entries, with \( p_i^o + p_i^c = 2R \), according to the selection indicator \( \boldsymbol{C}_i \), \( \ii \).

Let
\[
\boldsymbol{\mu}_i = \mathrm{vec}(\mathbf{M}_i)
= \mathrm{vec}(\mathbf{Z}_i \boldsymbol{B}) \in \mathbb{R}^{2R},
\qquad
\boldsymbol{\Lambda} = \boldsymbol{\Psi} \otimes \boldsymbol{\Sigma} \in \mathbb{R}^{2R \times 2R},
\]
denote the mean vector and covariance matrix of \( \mathbf{y}_i \). Conformably with the partition of \( \mathbf{y}_i \), write
\[
\boldsymbol{\mu}_i =
\begin{pmatrix}
\boldsymbol{\mu}_i^o \\
\boldsymbol{\mu}_i^c
\end{pmatrix},
\qquad
\boldsymbol{\Lambda}_i =
\begin{pmatrix}
\boldsymbol{\Lambda}_i^{oo} & \boldsymbol{\Lambda}_i^{oc} \\
\boldsymbol{\Lambda}_i^{co} & \boldsymbol{\Lambda}_i^{cc}
\end{pmatrix},
\]
where \( \boldsymbol{\Lambda}_i^{oo}, \boldsymbol{\Lambda}_i^{oc}, \boldsymbol{\Lambda}_i^{co} \), and \( \boldsymbol{\Lambda}_i^{cc} \) denote the submatrices of \( \boldsymbol{\Lambda} \) associated with the observation pattern of the \(i\)th individual.

By standard properties of the multivariate normal distribution,
\[
\mathbf{y}_i^o \sim \mathcal{N}_{p_i^o}(\boldsymbol{\mu}_i^o,\boldsymbol{\Lambda}_i^{oo}),
\]
and
\[
\mathbf{y}_i^c \mid \mathbf{y}_i^o = \mathbf{y}_i^o
\sim
\mathcal{N}_{p_i^c}(\boldsymbol{\mu}_i^{co},\boldsymbol{\Lambda}_i^{cc.o}),
\]
where
\begin{equation}
\boldsymbol{\mu}_i^{co}
=
\boldsymbol{\mu}_i^c
+
\boldsymbol{\Lambda}_i^{co}
(\boldsymbol{\Lambda}_i^{oo})^{-1}
(\mathbf{y}_i^o-\boldsymbol{\mu}_i^o),
\qquad
\boldsymbol{\Lambda}_i^{cc.o}
=
\boldsymbol{\Lambda}_i^{cc}
-
\boldsymbol{\Lambda}_i^{co}
(\boldsymbol{\Lambda}_i^{oo})^{-1}
\boldsymbol{\Lambda}_i^{oc}.
\label{eqn mu.co S.co}
\end{equation}

Following \citet{lachos2025}, the observed-data likelihood for
\[
\boldsymbol{\theta}
=
(\boldsymbol{\beta}_1,\ldots,\boldsymbol{\beta}_R,
\boldsymbol{\gamma}_1,\ldots,\boldsymbol{\gamma}_R,
\boldsymbol{\Sigma},\boldsymbol{\Psi})
\]
is given by
\begin{equation}
L(\boldsymbol{\theta}\mid \mathrm{vec}(\widetilde{\boldsymbol{\mathcal{V}}}),\boldsymbol{C})
=
\prod_{i=1}^n L_i(\boldsymbol{\theta}),
\label{eq:obslik}
\end{equation}
where the contribution of the \(i\)th observation is
\begin{equation}
L_i(\boldsymbol{\theta})
=
P\!\left(
\mathbf{v}_{1i}^c \le \mathbf{y}_i^c \le \mathbf{v}_{2i}^c
\,\middle|\,
\mathbf{y}_i^o,\boldsymbol{\theta}
\right)
f(\mathbf{y}_i^o\mid\boldsymbol{\theta}).
\label{eq:Li}
\end{equation}
The integration bounds \(\mathbf{v}_{1i}^c\) and \(\mathbf{v}_{2i}^c\) encode the censoring structure induced by \(\boldsymbol{C}_i\). Specifically, for each \(r\), the latent selection variable \(Y_{r2i}\) is constrained to \((0,+\infty)\) when \(C_{ri}=1\) and to \((-\infty,0]\) when \(C_{ri}=0\), whereas unobserved outcome components \(Y_{r1i}\) are left unrestricted over \((-\infty,+\infty)\).
Using the conditional and marginal normal distributions above, we obtain
\begin{equation}
L_i(\boldsymbol{\theta})
=
\Phi_{p_i^c}\!\left(
\mathbf{v}_{1i}^c,\mathbf{v}_{2i}^c
\mid
\boldsymbol{\mu}_i^{co},\boldsymbol{\Lambda}_i^{cc.o}
\right)
\,
\phi_{p_i^o}\!\left(
\mathbf{y}_i^o;
\boldsymbol{\mu}_i^o,\boldsymbol{\Lambda}_i^{oo}
\right),
\label{eq:Li_closed}
\end{equation}
where \( \phi_{p_i^o}(\cdot;\boldsymbol{\mu}_i^o,\boldsymbol{\Lambda}_i^{oo}) \) denotes the \(p_i^o\)-variate normal density and
\( \Phi_{p_i^c}(\mathbf{v}_{1i}^c,\mathbf{v}_{2i}^c \mid \boldsymbol{\mu}_i^{co},\boldsymbol{\Lambda}_i^{cc.o}) \) denotes the probability that a \(p_i^c\)-variate normal random vector with mean \( \boldsymbol{\mu}_i^{co} \) and covariance matrix \( \boldsymbol{\Lambda}_i^{cc.o} \) falls in the rectangle \( [\mathbf{v}_{1i}^c,\mathbf{v}_{2i}^c] \).

Therefore, the observed-data log-likelihood is
\begin{equation}
\ell(\boldsymbol{\theta}\mid \mathrm{vec}(\widetilde{\boldsymbol{\mathcal{V}}}),\boldsymbol{C})
=
\sum_{i=1}^n \log L_i(\boldsymbol{\theta})
=
\sum_{i=1}^n
\log
\Phi_{p_i^c}\!\left(
\mathbf{v}_{1i}^c,\mathbf{v}_{2i}^c
\mid
\boldsymbol{\mu}_i^{co},\boldsymbol{\Lambda}_i^{cc.o}
\right)
+
\sum_{i=1}^n
\log
\phi_{p_i^o}\!\left(
\mathbf{y}_i^o;
\boldsymbol{\mu}_i^o,\boldsymbol{\Lambda}_i^{oo}
\right).
\label{eq:loglik_obs}
\end{equation}

Direct maximization of \eqref{eq:loglik_obs} is computationally challenging because the likelihood involves multivariate normal rectangle probabilities whose dimension and partition structure vary across observations according to the censoring pattern. To address this difficulty, we develop an Expectation--Conditional Maximization (ECM) algorithm based on the complete-data representation of the latent responses. This approach replaces direct maximization of the observed-data likelihood by a sequence of simpler conditional maximization steps.

\section{The ECM algorithm}\label{sec:ecm_algorithm}

The ECM algorithm, proposed by \citet{Meng93}, replaces the M-step of the EM algorithm with a series of simpler conditional maximization (CM) steps. This modification retains EM's stability while often improving convergence speed \citep{McLachlanKrishnan}. The parameter updates in our ECM algorithm closely follow the framework established by \citet{lachos2025}, who proposed a matrix-variate EM-type approach for incomplete and censored data. Given the structural similarity between their setting and the matrix-variate selection model considered here, we adopt their estimation strategy with notational and contextual adjustments. To avoid redundancy, we omit the full derivations and present only the resulting update equations, referring the interested reader to \citet{lachos2025} for technical details and proofs.

In the complete-data formulation, we augment the observed matrices \(\mathcal{V}_i\), selection indicators \(\bC_i\), and censored values with the full latent matrices \(\boldsymbol{\mathcal{Y}_i} \in \mathbb{R}^{2 \times R}\), forming the augmented dataset \(\mathcal{S} = \{(\boldsymbol{\mathcal{Y}_i}, \boldsymbol{\mathcal{V}_i}, \bC_i): i = 1, \dots, n\}\).

The complete-data log-likelihood is given by
\[
\ell_c(\boldsymbol{\theta}) = -nR \log(2\pi) - \frac{nR}{2} \log|\boldsymbol{\Sigma}| - n \log|\boldsymbol{\Psi}| 
- \frac{1}{2} \sum_{i=1}^n \mathrm{tr}\left[ \boldsymbol{\Sigma}^{-1} (\boldsymbol{{\mathcal{Y}}_i} - \mathbf{M}_i) \boldsymbol{\Psi}^{-1} (\boldsymbol{{\mathcal{Y}}_i} - \mathbf{M}_i)^\top \right],
\]
where $\btheta=(\bbeta_1,\ldots,\bbeta_R,\bgamma_1,\ldots,\bgamma_R,\bSigma,\bPsi)=(\bbeta^c_1,\ldots,\bbeta^c_R,\bSigma,\bPsi)$, with $\bbeta_r^c$ as defined in (\ref{betac}) and $\mathbf{M}_i={\mathbf{Z}}_i {\boldsymbol{B}}$.
Subsequently, the ECM algorithm for the MSLn model can be summarized as follows:

\indent{\em \bf E-step:}
Given the current estimate  $\widehat{\btheta}^{\ka} = \{\widehat{\bbeta}^{c\ka}_1,\ldots, \widehat{\bbeta}^{c\ka}_R, \widehat{\bSigma}^{\ka}, \widehat{\bPsi}^{\ka}\}$ at the $k$th step of the algorithm, the E-step provides the conditional expectation of the complete data log-likelihood function, i.e.,
\begin{eqnarray} Q(\btheta\mid\widehat{\btheta}^{\ka})=\mathbb{E} \Bigl[ \ell_c(\btheta)\mid\widetilde{\boldsymbol{\mathcal{V}}},{{\bC}},\widehat{\btheta}^{\ka} \Bigr] = \sumas{Q_i(\btheta\mid\widehat{\btheta}^{\ka})},\label{Q1-1} \end{eqnarray}
where
\begin{eqnarray*}
	Q_i(\btheta\mid\widehat{\btheta}^{\ka})&=& - R \log(2\pi) -\log |\bPsi| - \frac{R}{2} \log|\bSigma|-\frac{1}{2}\tr\left\{\left(\bPsi\otimes\bSigma\right)^{-1} \widehat{\bDelta}^{(k)}\right\},
\end{eqnarray*}
with $$\widehat{\bDelta}^{(k)}=\left[\left(\text{vec}(\widehat{\boldsymbol{\mathcal{Y}}}^{\ka}_i)-\text{vec}({\mathbf{Z}}_i {\boldsymbol{B}})\right)\left(\text{vec}(\widehat{\boldsymbol{\mathcal{Y}}}^{\ka}_i)-\text{vec}({\mathbf{Z}}_i {\boldsymbol{B}})\right)^{\top}+ \widehat{\bV}^{\ka}_i\right],$$ 
where $\text{vec}(\widehat{\boldsymbol{\mathcal{Y}}}_i^{\ka})= \mathbb{E}_{\boldsymbol{\mathcal{Y}}_i}[\text{vec}(\boldsymbol{\mathcal{Y}}_i) \,|\, \widetilde{\boldsymbol{\mathcal{V}}}_i, \bC_i, \widehat{\btheta}^{\ka}]$ and
$\widehat{\bV}^{\ka}_i={\mathbb Cov}_{\boldsymbol{\mathcal{Y}}_i}[\text{vec}(\boldsymbol{\mathcal{Y}}_i) \,|\, \widetilde{\boldsymbol{\mathcal{V}}}_i, {\bC}_i, \widehat{\btheta}^{\ka}]$.\\

The E-step reduces only to the computation of $\text{vec}(\widehat{\boldsymbol{\mathcal{Y}}}_i^{\ka})$ and $\widehat{\bV}^{\ka}_i$, that is, the first and second moments of a truncated multinormal distribution. These can be determined in closed form, as a function of MN probabilities using a sequence of simple transformations, for which we use the \texttt{MomTrunc} package in \texttt{R} \citep{GalarzaCran}. 

Before discussing the implementation of the CM-steps, the following results are essential. For clarity of exposition, we omit the superscript $k$ and subscript $i$ in the next lemma.

{
\begin{Lemma}\label{lemma1}
Under the MSLn model, the matrix $\widehat{\bDelta}$ of dimension $2R\times 2R$, which appears in the $Q$-function (\ref{Q1-1}), is symmetric and positive definite $(\succ)$.
\end{Lemma}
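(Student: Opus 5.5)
The plan is to write $\widehat{\bDelta}$ as a conditional second moment rather than treat its two summands separately. By the definitions in the E-step,
\[
\widehat{\bDelta}=\big(\mathbb{E}[\mathbf{y}\mid\cdot]-\boldsymbol{\mu}\big)\big(\mathbb{E}[\mathbf{y}\mid\cdot]-\boldsymbol{\mu}\big)^{\top}+\mathrm{Cov}[\mathbf{y}\mid\cdot]
=\mathbb{E}\big[(\mathbf{y}-\boldsymbol{\mu})(\mathbf{y}-\boldsymbol{\mu})^{\top}\mid \widetilde{\boldsymbol{\mathcal{V}}},\bC,\widehat{\btheta}\big],
\]
where $\mathbf{y}=\mathrm{vec}(\boldsymbol{\mathcal{Y}})$ and $\boldsymbol{\mu}=\mathrm{vec}(\mathbf{Z}\boldsymbol{B})$. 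Symmetry is immediate, since both summands are symmetric. For positive definiteness, fix $\mathbf{a}\in\mathbb{R}^{2R}$ with $\mathbf{a}\neq\mathbf{0}$. Then
\[
\mathbf{a}^{\top}\widehat{\bDelta}\mathbf{a}=\mathbb{E}\big[\{\mathbf{a}^{\top}(\mathbf{y}-\boldsymbol{\mu})\}^2\mid\cdot\big]\ge 0.
\]
It therefore remains to show that $\mathbf{a}^{\top}(\mathbf{y}-\boldsymbol{\mu})$ is not almost surely zero under the conditional law.

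To describe that conditional law, I will use the partition $\mathbf{y}=(\mathbf{y}^{o\top},\mathbf{y}^{c\top})^{\top}$ of Section~\ref{sec:likelihood}. Given the data, $\mathbf{y}^o$ is fixed at its observed value. The vector $\mathbf{y}^c$ follows $\mathcal{N}_{p^c}(\boldsymbol{\mu}^{co},\boldsymbol{\Lambda}^{cc.o})$ truncated to the rectangle $[\mathbf{v}_1^c,\mathbf{v}_2^c]$. Two facts about this rectangle matter.
\begin{itemize}
\item Every selection variable $Y_{r2}$ is always censored, so $p^c\ge R\ge1$.
\item The rectangle is a product of half-lines and whole lines, so it has nonempty interior.
\end{itemize}
The matrix $\boldsymbol{\Lambda}^{cc.o}$ is a Schur complement of the positive definite matrix $\bPsi\otimes\bSigma$, and hence is positive definite. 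A normal law with positive definite covariance, truncated to a set of positive Lebesgue measure, has a positive definite covariance matrix. Indeed, if $\mathbf{b}^{\top}\mathbf{y}^c$ were constant under it, the truncated law would be supported on a hyperplane, which has Lebesgue measure zero. So the conditional covariance $\widehat{\bV}$ has a positive definite $cc$-block and zero rows and columns elsewhere.

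Now split $\mathbf{a}=(\mathbf{a}_o^{\top},\mathbf{a}_c^{\top})^{\top}$ into two cases.
\begin{itemize}
\item If $\mathbf{a}_c\neq\mathbf{0}$, then $\mathbf{a}^{\top}\widehat{\bDelta}\mathbf{a}\ge\mathbf{a}_c^{\top}\widehat{\bV}^{cc}\mathbf{a}_c>0$, using the previous paragraph.
\item If $\mathbf{a}_c=\mathbf{0}$, then $\mathbf{a}^{\top}\widehat{\bDelta}\mathbf{a}=\{\mathbf{a}_o^{\top}(\mathbf{y}^o-\mathbb{E}[\mathbf{y}^o\mid\cdot]+\mathbb{E}[\mathbf{y}^o\mid\cdot]-\boldsymbol{\mu}^o)\}^2=\{\mathbf{a}_o^{\top}(\mathbf{y}^o-\boldsymbol{\mu}^o)\}^2$, because $\mathbf{y}^o$ is fixed.
\end{itemize}

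The second case is the main obstacle, because there the quadratic form is positive exactly when $\mathbf{a}_o^{\top}(\mathbf{y}^o-\boldsymbol{\mu}^o)\neq0$. This is where I will use that the observed outcomes come from the continuous law $\mathcal{N}_{p^o}(\boldsymbol{\mu}^o,\boldsymbol{\Lambda}^{oo})$. When $p^o=1$, the event $y^o=\mu^o$ has probability zero, so the second case gives a positive value almost surely. For $p^o\ge2$ this pointwise argument fails, since some $\mathbf{a}_o\neq\mathbf{0}$ is always orthogonal to $\mathbf{y}^o-\boldsymbol{\mu}^o$. The fix I would attempt is to exploit the coupling of outcome and selection coordinates through the cross term of $\bSigma$, as in Proposition~\ref{prop:SUN}, and to rewrite the quadratic form via the regression of $\mathbf{y}^c$ on $\mathbf{y}^o$. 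In that form the observed coordinates enter through $\boldsymbol{\Lambda}^{co}(\boldsymbol{\Lambda}^{oo})^{-1}$ acting on the truncated component.

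If that rewriting does not close the gap, I would state the lemma in the precise sense the argument supports. That is, $\widehat{\bDelta}$ is positive definite whenever the $o$-block $(\mathbf{y}^o-\boldsymbol{\mu}^o)(\mathbf{y}^o-\boldsymbol{\mu}^o)^{\top}$ is nonsingular. This holds with probability one when $p^o\le1$, and trivially when no outcome is observed ($p^o=0$), where $\widehat{\bDelta}\succeq\widehat{\bV}\succ0$ directly.
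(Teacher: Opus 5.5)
The gap is in the lemma and the paper's proof, not in your analysis. Your reduction $\widehat{\bDelta}=\mathbb{E}[(\mathbf{y}-\boldsymbol{\mu})(\mathbf{y}-\boldsymbol{\mu})^{\top}\mid\cdot\,]$ is the paper's first step too. The paper then finishes in one line by asserting $\mathbb{P}(\boldsymbol{v}^{\top}(\mathbf{y}-\boldsymbol{\mu})=0\mid\widehat{\mathcal{F}})=0$ for every nonzero $\boldsymbol{v}$. Your case split shows why that step is false. Conditionally on the data, $\mathbf{y}^o$ is a point mass. So for $\boldsymbol{v}=(\mathbf{a}_o^{\top},\mathbf{0}^{\top})^{\top}$ the quantity $\boldsymbol{v}^{\top}(\mathbf{y}-\boldsymbol{\mu})=\mathbf{a}_o^{\top}(\mathbf{y}^o-\boldsymbol{\mu}^o)$ is a constant, not a continuous random variable.

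Your case~2 is not an obstacle to patch; it is a counterexample. For $\mathbf{a}_c=\mathbf{0}$ the identity $\mathbf{a}^{\top}\widehat{\bDelta}\mathbf{a}=\{\mathbf{a}_o^{\top}(\mathbf{y}^o-\boldsymbol{\mu}^o)\}^2$ is exact, because the $o$-rows and $o$-columns of $\widehat{\bV}$ vanish and no cross term survives. Whenever $p^o=R^\ast\ge 2$, any nonzero $\mathbf{a}_o$ orthogonal to $\mathbf{y}^o-\boldsymbol{\mu}^o$ gives $\mathbf{a}^{\top}\widehat{\bDelta}\mathbf{a}=0$. Equivalently, $\operatorname{rank}(\widehat{\bDelta})\le 1+p^c=2R-R^\ast+1<2R$. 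No rewriting through $\boldsymbol{\Lambda}^{co}(\boldsymbol{\Lambda}^{oo})^{-1}$ can change a quadratic form you have already evaluated exactly, so drop that attempt.

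Your fallback is the correct result, and your condition is necessary as well as sufficient. The matrix $\widehat{\bDelta}$ is always symmetric positive semidefinite. It is positive definite if and only if $R^\ast=0$, or $R^\ast=1$ with $y^o\neq\mu^o$. The failure is not marginal: every participating woman in the Mroz application has $R_i^\ast=2$.

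Downstream, Theorem~1 only needs some factorization $\widehat{\bDelta}^{(k)*}_i=\LL_i\LL_i^{\top}$. Every positive semidefinite matrix admits one, for example a pivoted Cholesky factor or the symmetric square root. So the CM-step derivation survives under semidefiniteness. The paper's remarks about strict convexity and unique updates would instead have to be argued from aggregate quantities over $i$.
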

}

\begin{proof}
See Appendix for details.
\end{proof}

{Lemma~\ref{lemma1} shows that the matrix
$\widehat{\boldsymbol{\Delta}}$ is symmetric and positive definite.
This property guarantees that the quadratic term in the $Q$-function
is properly defined and induces a strictly convex optimization problem.
\\
From a computational perspective, this ensures that the ECM updates are stable and that the parameter estimates are uniquely determined at
each iteration. Hence, the result provides a key theoretical justification
for the convergence behavior of the proposed algorithm.}\\

\indent{\bf \em CM-step 1:} Updating equation for $\boldsymbol{B}$.\\

For fixed $\widehat{\bSigma}^{\ka}$ and $\widehat{\bPsi}^{\ka}$, the first CM-step consists of maximizing 
$Q(\btheta\mid\widehat{\btheta}^{\ka})$ with respect to $\boldsymbol{B}$.  From (\ref{Q1-1}), and using the identity 
$\text{vec}(\mathbf{A})^\top (\bPsi \otimes \bSigma)^{-1} \text{vec}(\mathbf{A})
= \tr\big(\bSigma^{-1} \mathbf{A} \bPsi^{-1} \mathbf{A}^\top\big)$, 
the part of the $Q$-function that depends on $\boldsymbol{B}$ reduces, up to an additive constant, to
\[
-\frac{1}{2}\sum_{i=1}^n 
\tr\left[
\widehat{\bSigma}^{-1\ka}
\left(\boldsymbol{\widehat{\mathcal{Y}}}_i^{\ka}-\mathbf{Z}_i\boldsymbol{B}\right)
\widehat{\bPsi}^{-1\ka}
\left(\boldsymbol{\widehat{\mathcal{Y}}}_i^{\ka}-\mathbf{Z}_i\boldsymbol{B}\right)^\top
\right].
\]

Differentiating with respect to $\boldsymbol{B}$ and setting the derivative equal to zero, we obtain
\[
\sum_{i=1}^n
\mathbf{Z}_i^\top
\widehat{\bSigma}^{-1\ka}
\left(\boldsymbol{\widehat{\mathcal{Y}}}_i^{\ka}-\mathbf{Z}_i\boldsymbol{B}\right)
\widehat{\bPsi}^{-1\ka}
=\mathbf{0}.
\]

Hence,
\[
\left(
\sum_{i=1}^n
\mathbf{Z}_i^\top \widehat{\bSigma}^{-1\ka}\mathbf{Z}_i
\right)\boldsymbol{B}\,\widehat{\bPsi}^{-1\ka}
=
\sum_{i=1}^n
\mathbf{Z}_i^\top \widehat{\bSigma}^{-1\ka}\boldsymbol{\widehat{\mathcal{Y}}}_i^{\ka}\,\widehat{\bPsi}^{-1\ka}.
\]

Since $\widehat{\bPsi}^{\ka}$ is nonsingular, right-multiplication by $\widehat{\bPsi}^{\ka}$ yields
\begin{equation}\label{eq:Bupdate}
\widehat{\boldsymbol{B}}^{\ke}
=
\left(
\sum_{i=1}^n
\mathbf{Z}_i^\top \widehat{\bSigma}^{-1\ka}\mathbf{Z}_i
\right)^{-1}
\left(
\sum_{i=1}^n
\mathbf{Z}_i^\top \widehat{\bSigma}^{-1\ka}\boldsymbol{\widehat{\mathcal{Y}}}_i^{\ka}
\right).
\end{equation}

Note that $\boldsymbol{B}$ is block diagonal, with $r$th block $\boldsymbol{\beta}_r^c$ as defined in \eqref{betac}. Since $\mathbf{Z}_i$ shares the same structure, the minimization decouples over $r=1,\ldots,R$. Let $\widehat{\by}_{ir}^{\ka}$ denote the $r$th column of $\boldsymbol{\widehat{\mathcal{Y}}_i^{\ka}}$ and define
\[
\mathbf{Z}_{ir}
=
\begin{pmatrix}
\mathbf{x}_{ir}^\top & \mathbf{0}^\top\\
\mathbf{0}^\top & \mathbf{w}_{ir}^\top
\end{pmatrix}.
\]
Then
\begin{equation}\label{eq:betac_update}
\widehat{\boldsymbol{\beta}}_r^{c\ke}
=
\left(
\sum_{i=1}^n
\mathbf{Z}_{ir}^\top \widehat{\bSigma}^{-1\ka}\mathbf{Z}_{ir}
\right)^{-1}
\left(
\sum_{i=1}^n
\mathbf{Z}_{ir}^\top \widehat{\bSigma}^{-1\ka}\widehat{\by}_{ir}^{\ka}
\right),
\qquad r=1,\ldots,R.
\end{equation}

Writing
\[
\widehat{\bSigma}^{-1\ka}
=
\frac{1}{\widehat{\sigma}^{2\ka}(1-\widehat{\rho}^{2\ka})}
\begin{pmatrix}
1 & -\widehat{\rho}^{\ka}\widehat{\sigma}^{\ka}\\
-\widehat{\rho}^{\ka}\widehat{\sigma}^{\ka} & \widehat{\sigma}^{2\ka}
\end{pmatrix},
\]
and letting $\widehat{\by}_{ir}^{\ka}=(\widehat{Y}_{1ri}^{\ka},\widehat{Y}_{2ri}^{\ka})^\top$, we obtain
\[
\widehat{\boldsymbol{\beta}}_r^{c\ke}
=
\begin{pmatrix}
\widehat{\boldsymbol{\beta}}_r^{\ke}\\[0.3em]
\widehat{\boldsymbol{\gamma}}_r^{\ke}
\end{pmatrix},
\]
with
\begin{align}
\widehat{\boldsymbol{\beta}}_r^{\ke}
&=
\left(
\sum_{i=1}^n
\frac{\mathbf{x}_{ir}\mathbf{x}_{ir}^\top}
{\widehat{\sigma}^{2\ka}(1-\widehat{\rho}^{2\ka})}
\right)^{-1}
\left(
\sum_{i=1}^n
\frac{\mathbf{x}_{ir}\left(\widehat{Y}_{1ri}^{\ka}
-\widehat{\rho}^{\ka}\widehat{\sigma}^{\ka}\widehat{Y}_{2ri}^{\ka}\right)}
{\widehat{\sigma}^{2\ka}(1-\widehat{\rho}^{2\ka})}
\right),
\label{eq:beta_update}
\\[1ex]
\widehat{\boldsymbol{\gamma}}_r^{\ke}
&=
\left(
\sum_{i=1}^n
\frac{\mathbf{w}_{ir}\mathbf{w}_{ir}^\top}
{1-\widehat{\rho}^{2\ka}}
\right)^{-1}
\left(
\sum_{i=1}^n
\frac{\mathbf{w}_{ir}\left(\widehat{Y}_{2ri}^{\ka}
-\widehat{\rho}^{\ka}\widehat{\sigma}^{-\ka}\widehat{Y}_{1ri}^{\ka}\right)}
{1-\widehat{\rho}^{2\ka}}
\right).
\label{eq:gamma_update}
\end{align}

After obtaining $\widehat{\boldsymbol{B}}^{\ke}$, define
\[
\widehat{\bDelta}^{\ka *}_i
=
\left[
\left(\text{vec}(\boldsymbol{\widehat{\mathcal{Y}}}^{\ka}_i)-\text{vec}({\mathbf{Z}}_i \widehat{\boldsymbol{B}}^{\ke})\right)
\left(\text{vec}(\boldsymbol{\widehat{\mathcal{Y}}}^{\ka}_i)-\text{vec}({\mathbf{Z}}_i \widehat{\boldsymbol{B}}^{\ke})\right)^{\top}
+ \widehat{\bV}^{\ka}_i
\right].
\]
By Lemma \ref{lemma1}, $\widehat{\bDelta}^{\ka *}_i$ is symmetric positive definite, and therefore admits a Cholesky decomposition. This property allows the $Q$-function, evaluated at $\widehat{\boldsymbol{B}}^{\ke}$, to be rewritten in a form depending only on $\bSigma$ and $\bPsi$, as established in the following theorem.

In the following Theorem, we use the notation $\widehat{\boldsymbol{B}}^{(k)}$ to represent

\[
\widehat{\boldsymbol{B}}^{(k)} =
\begin{pmatrix}
\widehat{\boldsymbol{\beta}}^{c(k)}_1 & \mathbf{0} & \cdots & \mathbf{0} \\
\mathbf{0} & \widehat{\boldsymbol{\beta}}^{c(k)}_2 & \cdots & \mathbf{0} \\
\vdots & \vdots & \ddots & \vdots \\
\mathbf{0} & \mathbf{0} & \cdots & \widehat{\boldsymbol{\beta}}^{c(k)}_R
\end{pmatrix}.
\]

{
\begin{theorem} At  $k$th step of the EM algorithm the $Q$-function in (\ref{Q1-1}), with $\boldsymbol{B}$ replaced by $\widehat{\boldsymbol{B}}^{(k+1)}$, can be expressed as
  \begin{equation}\label{Q-new}
  Q(\widehat{\boldsymbol{B}}^{\ke}, {\bSigma}, {\bPsi}\mid\widehat{\btheta}^{\ka})=  - n \log |\bPsi| - \frac{nR}{2} \log|\bSigma|-\frac{1}{2}\sumas\sum_{j=1}^{2R}\tr \left[\bSigma^{-1}\widehat{\bD}^{\ka}_{ij}\bPsi^{-1}\widehat{\bD}^{\top\ka}_{ij}\right],
  \end{equation}
  where $\widehat{\bD}^{\ka}_{ij}$ is a $2\times R$ matrix such that $\operatorname{vec}(\widehat{\bD}^{\ka}_{ij})=\widehat{\LL}^{\ka}_{ij}$.
Here $\widehat{\LL}^{\ka}_{ij}$ is $j$th column of the $2R\times 2R$ lower triangular matrix $\widehat{\LL}^{\ka}_{i}$, obtained from the Cholesky decomposition of the matrix 
\begin{align*}
    \widehat{\bDelta}^{\ka*}_i &= \left[\left(\text{vec}(\boldsymbol{\widehat{\mathcal{Y}}}^{\ka}_i)-\text{vec}({\mathbf{Z}}_i \widehat{\boldsymbol{B}}^{(k+1)})\right)\left(\text{vec}(\boldsymbol{\widehat{\mathcal{Y}}}^{\ka}_i)-\text{vec}({\mathbf{Z}}_i \widehat{\boldsymbol{B}}^{(k+1)})\right)^{\top}+ \widehat{\bV}^{\ka}_i\right].
\end{align*}
\end{theorem}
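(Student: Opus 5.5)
Starting from the per-individual form of the $Q$-function in (\ref{Q1-1}), I set $\boldsymbol{B}=\widehat{\boldsymbol{B}}^{\ke}$, so that $\widehat{\bDelta}^{(k)}$ is replaced by $\widehat{\bDelta}^{\ka*}_i$. Summing the constant and log-determinant terms over $i$ gives $-n\log|\bPsi|-\frac{nR}{2}\log|\bSigma|$, up to the additive constant $-nR\log(2\pi)$. The statement evidently absorbs this constant, since it is free of $(\bSigma,\bPsi)$ and does not affect the subsequent CM-steps. It then remains to rewrite the trace term $\tr\{(\bPsi\otimes\bSigma)^{-1}\widehat{\bDelta}^{\ka*}_i\}$ for each $i$.

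By Lemma~\ref{lemma1}, $\widehat{\bDelta}^{\ka*}_i$ is symmetric positive definite; it has the same form as $\widehat{\bDelta}$, with $\boldsymbol{B}$ fixed at the value $\widehat{\boldsymbol{B}}^{\ke}$. It therefore has a Cholesky factorization $\widehat{\bDelta}^{\ka*}_i=\widehat{\LL}^{\ka}_i\widehat{\LL}^{\top\ka}_i$, which I write as a sum of rank-one terms, $\widehat{\bDelta}^{\ka*}_i=\sum_{j=1}^{2R}\widehat{\LL}^{\ka}_{ij}\widehat{\LL}^{\top\ka}_{ij}$. Linearity and cyclicity of the trace then give
\[
\tr\{(\bPsi\otimes\bSigma)^{-1}\widehat{\bDelta}^{\ka*}_i\}=\sum_{j=1}^{2R}\widehat{\LL}^{\top\ka}_{ij}(\bPsi^{-1}\otimes\bSigma^{-1})\widehat{\LL}^{\ka}_{ij},
\]
where I also use $(\bPsi\otimes\bSigma)^{-1}=\bPsi^{-1}\otimes\bSigma^{-1}$.

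For each $j$, define $\widehat{\bD}^{\ka}_{ij}$ as the $2\times R$ matrix obtained by un-vectorizing $\widehat{\LL}^{\ka}_{ij}$ column-wise, consistent with the convention $\mathbf{y}_i=\operatorname{vec}(\boldsymbol{\mathcal{Y}}_i)$ and $\boldsymbol{\Lambda}=\bPsi\otimes\bSigma$ used in the likelihood section. The identity already used in CM-step 1, $\operatorname{vec}(\mathbf{A})^\top(\bPsi\otimes\bSigma)^{-1}\operatorname{vec}(\mathbf{A})=\tr(\bSigma^{-1}\mathbf{A}\bPsi^{-1}\mathbf{A}^\top)$, follows from $\operatorname{vec}(\mathbf{P}\mathbf{Q}\mathbf{R})=(\mathbf{R}^\top\otimes\mathbf{P})\operatorname{vec}(\mathbf{Q})$ and $\tr(\mathbf{A}^\top\mathbf{B})=\operatorname{vec}(\mathbf{A})^\top\operatorname{vec}(\mathbf{B})$. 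Applying it with $\mathbf{A}=\widehat{\bD}^{\ka}_{ij}$ turns each summand into $\tr[\bSigma^{-1}\widehat{\bD}^{\ka}_{ij}\bPsi^{-1}\widehat{\bD}^{\top\ka}_{ij}]$. Summing over $j$ and $i$ and multiplying by $-\tfrac12$ yields (\ref{Q-new}).

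The only delicate point is bookkeeping rather than analysis. The vectorization order has to match the Kronecker order: column-wise $\operatorname{vec}$ pairs with $\bPsi\otimes\bSigma$, whereas the row-wise convention from the SUN section would pair with $\bSigma\otimes\bPsi$. With a mismatch, the roles of $\bSigma$ and $\bPsi$ in the trace would be swapped, so I will state the convention explicitly when defining $\widehat{\bD}^{\ka}_{ij}$. Strictly, positive definiteness is not even needed for the rank-one decomposition, since any factorization $\widehat{\bDelta}^{\ka*}_i=\mathbf{L}\mathbf{L}^\top$ works. Lemma~\ref{lemma1} is what guarantees that a Cholesky factor exists and is unique, which makes $\widehat{\bD}^{\ka}_{ij}$ well defined.
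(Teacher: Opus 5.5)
Your proof is correct and follows the paper's argument essentially step for step: the Cholesky factor of $\widehat{\bDelta}^{\ka*}_i$, the trace split into the $2R$ column quadratic forms, then the vec--trace identity, with the dropped $-nR\log(2\pi)$ constant and the column-wise $\operatorname{vec}$/$\bPsi\otimes\bSigma$ pairing made explicit where the paper leaves them implicit. Keep your remark that any factorization $\widehat{\bDelta}^{\ka*}_i=\mathbf{L}\mathbf{L}^\top$ suffices, but drop the appeal to Lemma~\ref{lemma1} for existence and uniqueness: observed outcomes are fixed given the data, so the observed block of $\widehat{\bDelta}^{\ka*}_i$ has rank one and the matrix is singular whenever two or more outcomes of individual $i$ are observed, and in any case the final expression depends on $\mathbf{L}$ only through $\mathbf{L}\mathbf{L}^\top$.
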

}
\begin{proof}
See Appendix for details.
\end{proof}
{
Theorem~1 shows that the $Q$-function admits a tractable
representation based on the Cholesky decomposition of
$\widehat{\boldsymbol{\Delta}}^{(k)*}_i$. By exploiting this
structure, the objective function can be expressed as a sum of
trace terms involving $\boldsymbol{\Sigma}^{-1}$ and
$\boldsymbol{\Psi}^{-1}$.
\\
This result is crucial for computational efficiency,
as it enables a decoupled and numerically stable
implementation of the M-step.}\\

\indent{\bf \em CM-step 2:}  Updating equations for $\bSigma$ and $\bPsi$.\\

Conditional maximization of $Q({\widehat{\boldsymbol{B}}}^{\ke}, {\bSigma}, {\bPsi}\mid\widehat{\btheta}^{\ka})$ in Theorem 1, with respect to $\bSigma$ and $\bPsi$ \citep[see,][equations 3 and 4]{glanz2018expectation}, yields the following update equations:
\begin{align}
\widehat{\bSigma}^{\ke} &= \frac{1}{nR}\sum_{i=1}^{n}\sum_{j=1}^{2R}\widehat{\bD}^{\ka}_{ij}\widehat{\bPsi}^{{\ka}{^{-1}}} \widehat{\bD}^{\ka\top}_{ij},\label{eq:Sigma}\\
\widehat{\bPsi}^{\ke} &= \frac{1}{2n}\sum_{i=1}^{n}\sum_{j=1}^{2R}\widehat{\bD}^{\ka\top}_{ij}\widehat{\bSigma}^{{\ka}{^{-1}}} \widehat{\bD}^{\ka}_{ij}.
\label{eq:Psi}
\end{align}	
Note that, to satisfy the identifiability constraint on $\bSigma$, we replace its second diagonal element by $\widehat{\bSigma}^{\ke} = 1$ at each iteration of the algorithm.

The ML estimation method via the EM algorithm may not yield global solutions if the initial values are far from the true parameter values. Thus, the choice of starting values plays an important role in parameter estimation. In our examples and simulation studies, we consider the following procedure for the MSLn model: $(i)$ The average of the observed values replaces missing values and censored observations to obtain complete data. $(ii)$ With the complete data, we use the \texttt{R} library \texttt{MixMatrix} to obtain initial values for $\bPsi^{(0)}$, and $\bSigma^{(0)}$. $(iii)$ For the initial value of the regression parameters, we use the \texttt{R} library \texttt{HeckmanEM} for each outcome.

Finally, we stopped the algorithm when $|\ell(\widehat{\btheta}^{\ke} | \text{vec}(\widetilde{\mathcal{V}}),\text{vec}(\widetilde{\mathcal{C}}))/\ell(\widehat{\btheta}^{\ka} | \text{vec}(\widetilde{\mathcal{V}}),\text{vec}(\widetilde{\mathcal{C}})) - 1| < \epsilon$, with $\epsilon = 10^{-6}$, i.e., the algorithm stops when the relative distance between two successive evaluations of the log-likelihood is less than the tolerance.

\section{Simulation Study}

In this section, we conduct a simulation study under various configurations to evaluate the performance of the proposed Heckman selection model for multivariate outcomes. 
First, we investigate the finite sample properties of the EM estimates under various sample sizes and missing rates. The goal is to assess the accuracy of the parameter estimates, particularly the regression coefficients matrix $\widetilde{\boldsymbol{B}} = (\boldsymbol{B}, \boldsymbol{\Gamma})' 
\in \mathbb{R}^{(p+q) \times R}$, the covariance matrices \(\boldsymbol{\Sigma} \in \mathbb{R}^{2 \times 2}\) and \(\boldsymbol{\Psi} \in \mathbb{R}^{R \times R}\) . Second, we compare the proposed method with the univariate Heckman model \citep{lachos2021heckman}, applied separately to each outcome using the \texttt{HeckmanEM} package. 
The proposed multivariate ECM estimator is implemented in the \texttt{mvHeckman} package (\url{https://github.com/heeju-lim/mvHeckman}).

\subsection{Finite sample properties}

The objective of this simulation is to evaluate how accurately the proposed ECM estimator, implemented in the \texttt{mvHeckman} package, recovers the model parameters in a multivariate Heckman selection setting. We consider the following scenarios: sample sizes \(n \in \{100, 200, 300\}\) and missing rates \(\in \{0.1, 0.25, 0.5\}\).For each individual $i = 1, \ldots, n$ and outcome $r = 1, \ldots ,R $, we define $\mathbf{w}_{ri}^\top = (1, w_{ri1}, w_{ri2})$ and impose the exclusion restriction $\mathbf{x}_{ri}^\top = (1, w_{ri1})$.

\paragraph{Scenario 1 (Baseline covariance structure).}~\\
We consider three distinct outcome designs to allow for outcome-specific covariate distributions while adopting a homogeneous compound-symmetric covariance structure for the outcome errors. In the first design, $w_{1i1} \sim \mathcal{N}_1(0,1)$ and $w_{1i2}\sim \mathcal{N}_1(0,1)$; in the second, $w_{2i1}\sim \mathcal{N}_1(0,1)$ and $w_{2i2}\sim t_{6}$, where $t_{6}$ denotes the Student’s t-distribution with 6 degrees of freedom; and in the third, $w_{3i1}\sim \mathrm{Unif}(-1,1)$, where $\mathrm{Unif}(a,b)$ denotes the uniform distribution on $[a,b]$ and $w_{3i2}\sim \mathcal{N}_1(0,1)$. The outcome correlation adopts a compound-symmetric structure 
\[
\boldsymbol{\Psi}=(1-\phi){\bf I}_q+\phi\,\mathbf{1}\mathbf{1}^\top,\quad \phi=0.4,
\]
for \(q = 3\) outcomes. The common scale is $\sigma=2$ (so $\sigma^2=4$), and the correlation between the outcome and selection errors is $\rho=0.6$. These values induce nontrivial cross-outcome dependence and a moderate selection effect. The true parameter matrices are specified as follows:
%\VHR{HEEju: we are using this notation $\mathcal{N}_p$ for normal!!!!}

\[
\boldsymbol{B} =
\begin{pmatrix}
1 & 1 & 1\\
0.3 & -0.8 & 2.0
\end{pmatrix},~~
\quad
\boldsymbol{\Gamma} =
\begin{pmatrix}
1 & 1 & 1\\
0.3 & -0.5 & 0.2 \\
-0.7 & -1 & 0.6
\end{pmatrix},
\]
\[
\quad
\boldsymbol{\Sigma} = 
\begin{pmatrix}
4 & 1.2 \\
1.2 & 1
\end{pmatrix},~~
\quad
\boldsymbol{\Psi} =
\begin{pmatrix}
1 &  0.4 & 0.4 \\
0.4 &  1 & 0.4 \\
0.4 &  0.4 & 1 
\end{pmatrix}.
\]

To evaluate the performance of the proposed model, we generate 100 Monte Carlo replications under these settings. 
Estimation accuracy for the regression coefficient matrices $\boldsymbol{B}$ and $\boldsymbol{\Gamma}$ is measured using the Frobenius norm error:
\[
\widetilde{\boldsymbol{B}}=\|\widehat{\boldsymbol{B}} - \boldsymbol{B}\|_F,
\quad
\widetilde{\boldsymbol{\Gamma}}=\|\widehat{\boldsymbol{\Gamma}} - \boldsymbol{\Gamma}\|_F.
\]
For the covariance parameters, including $\sigma$, $\rho$, and $\phi$, performance is evaluated based on the mean squared error (MSE) between the estimated and true values.

Figure~\ref{fig:finite_sample1} presents boxplots summarizing the finite sample properties under Scenario~1. As expected, estimation errors decrease with increasing sample size and increase with higher missing rates. These trends confirm that the proposed model consistently recovers the true parameters under moderate correlation and homogeneity in the outcome error structure.

\begin{figure}[htb!]
  \centering
  
  % (a) 위쪽
  \begin{subfigure}[b]{\textwidth}
    \centering
    \includegraphics[width=0.33\textwidth]{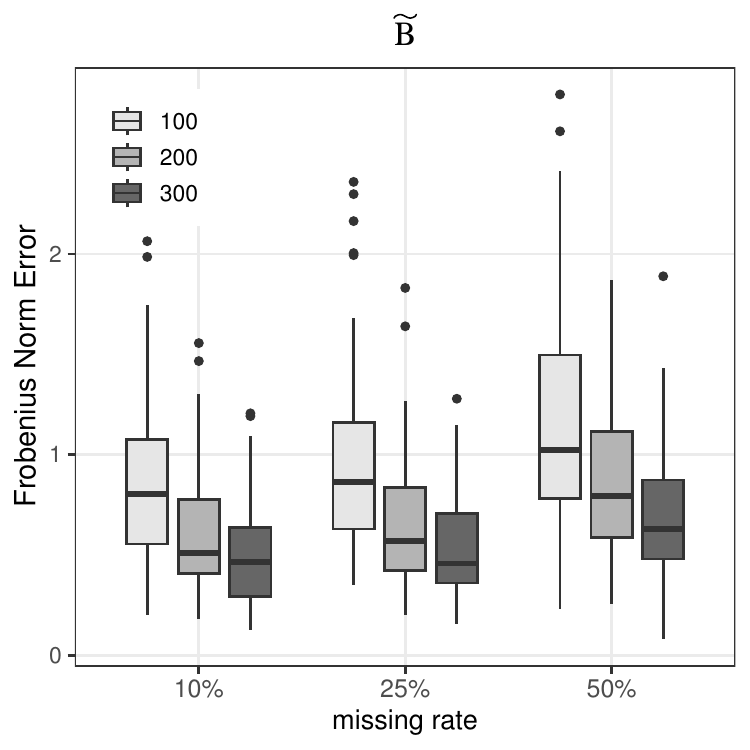}
    \includegraphics[width=0.33\textwidth]{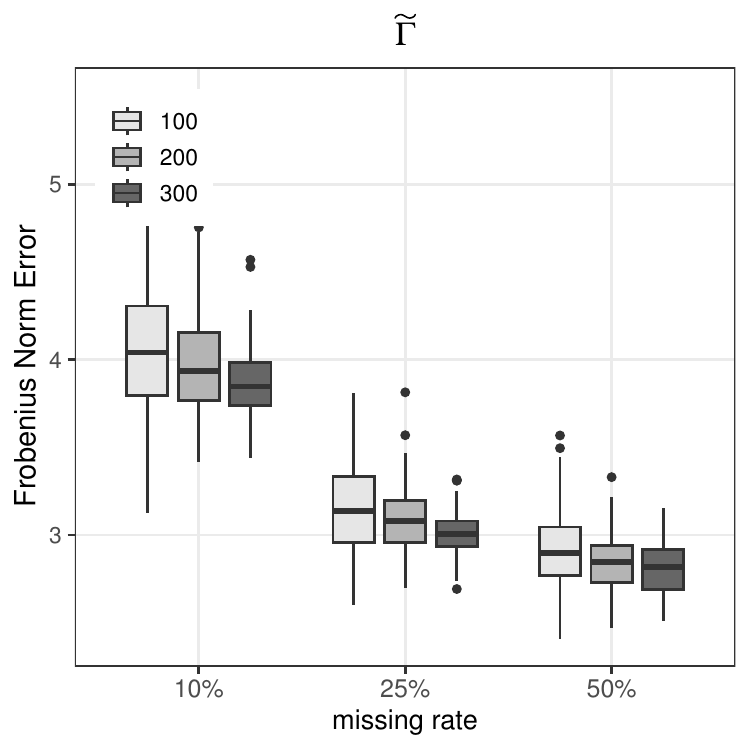}
    \caption{Boxplots of Frobenius norm errors for $\boldsymbol{B}$ and $\boldsymbol{\Gamma}$ based on 100 Monte Carlo replications across sample size and missing rate levels.}
    \label{fig:F_norm}
  \end{subfigure} 
  \vspace{0.5cm} % 위아래 간격 조절  
  % (b) 아래쪽
  \begin{subfigure}[b]{\textwidth}
    \centering
    \includegraphics[width=0.32\textwidth]{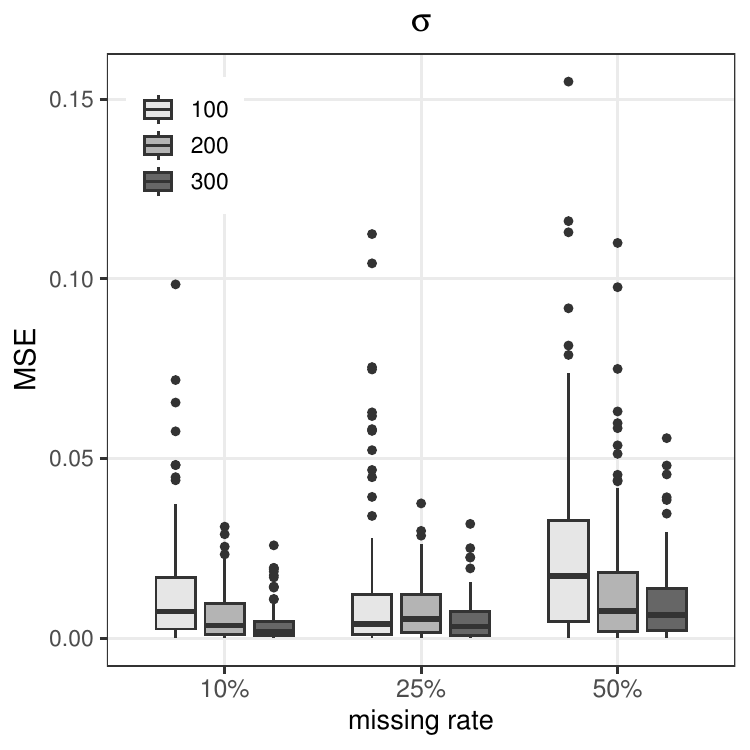}
    \includegraphics[width=0.32\textwidth]{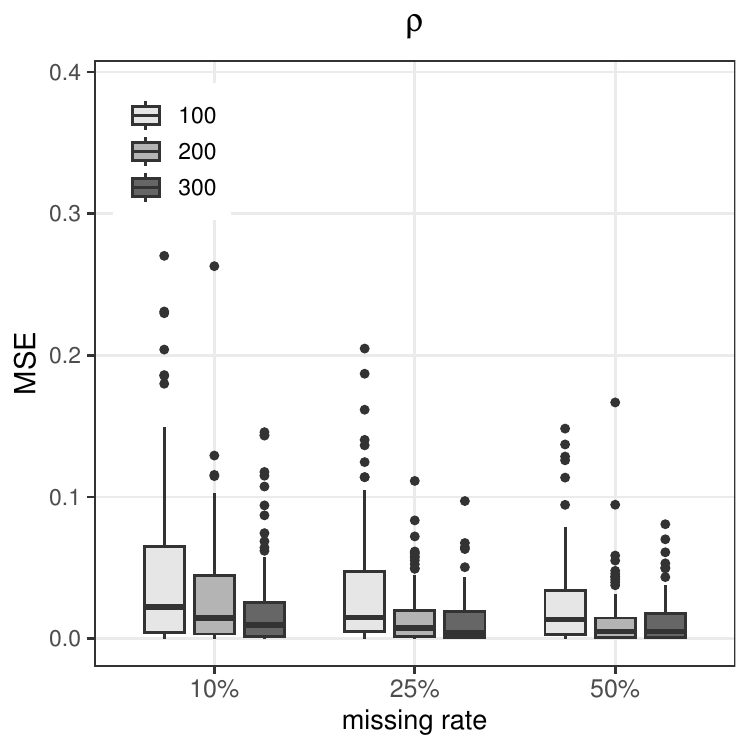}
    \includegraphics[width=0.32\textwidth]{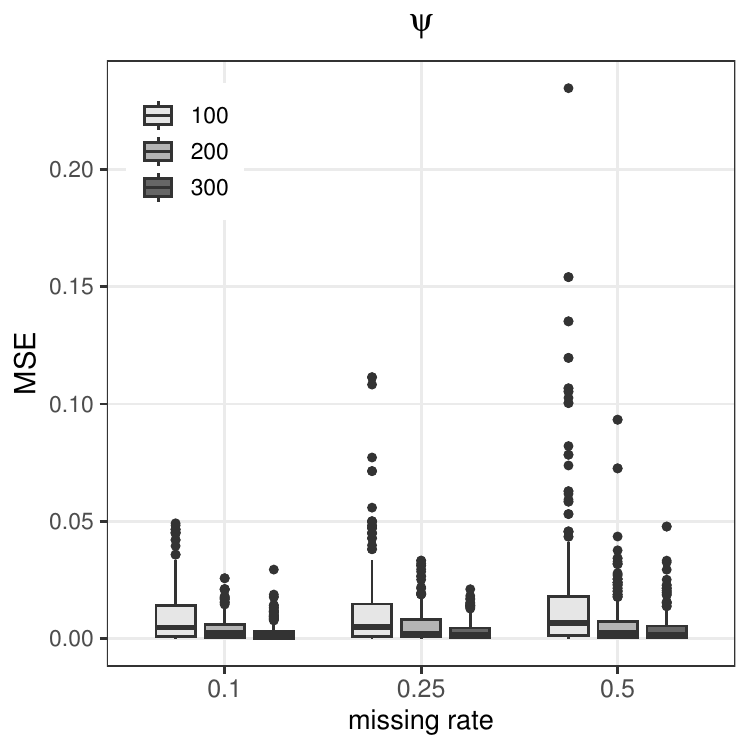}
    \caption{Boxplots of the MSE for $\sigma$, $\rho$, and $\phi$ based on 100 Monte Carlo replications across sample size and missing rate levels.}
  \end{subfigure}
\caption{Scenario~1. Assessment of finite sample properties for each parameter (baseline covariance structure).}
  \label{fig:finite_sample1}
\end{figure}

\begin{figure}[htb!]
  \centering
  
  % (a) 위쪽
  \begin{subfigure}[b]{\textwidth}
    \centering
    \includegraphics[width=0.33\textwidth]{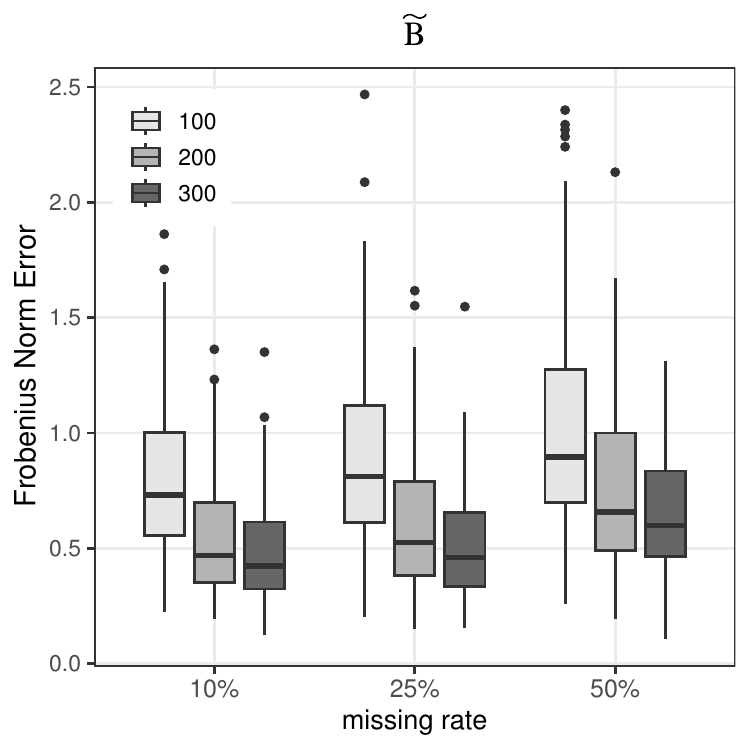}
    \includegraphics[width=0.33\textwidth]{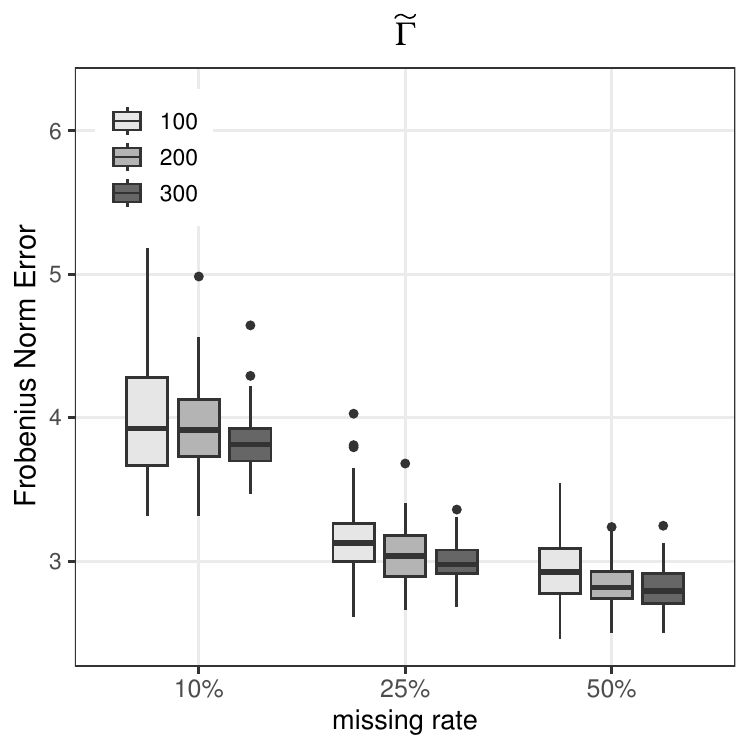}
    \caption{Boxplots of Frobenius norm errors for $\boldsymbol{B}$ and $\boldsymbol{\Gamma}$ based on 100 Monte Carlo replications across sample size and missing rate levels.}
  \end{subfigure} 
  \vspace{0.5cm} % 위아래 간격 조절  
  % (b) 아래쪽
  \begin{subfigure}[b]{\textwidth}
    \centering
    \includegraphics[width=0.32\textwidth]{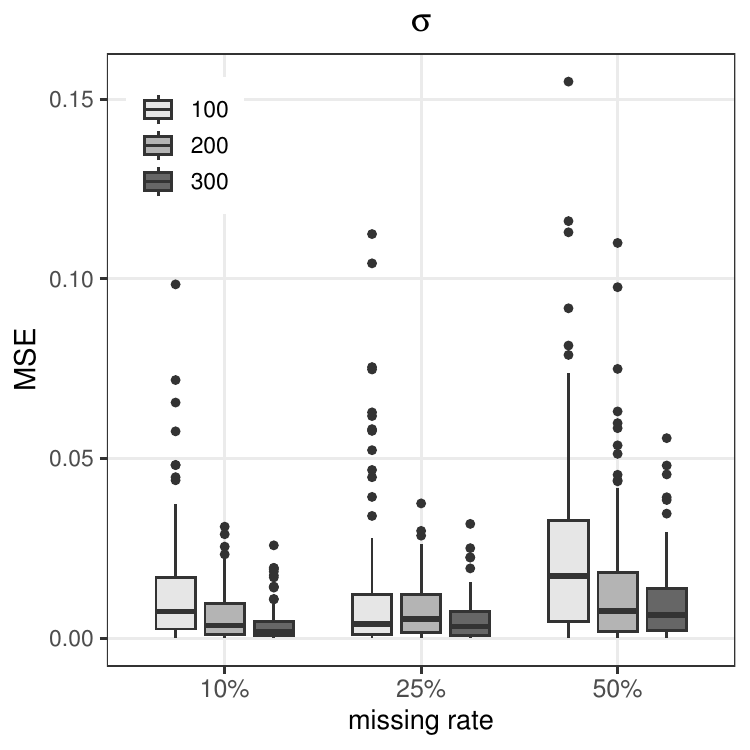}
    \includegraphics[width=0.32\textwidth]{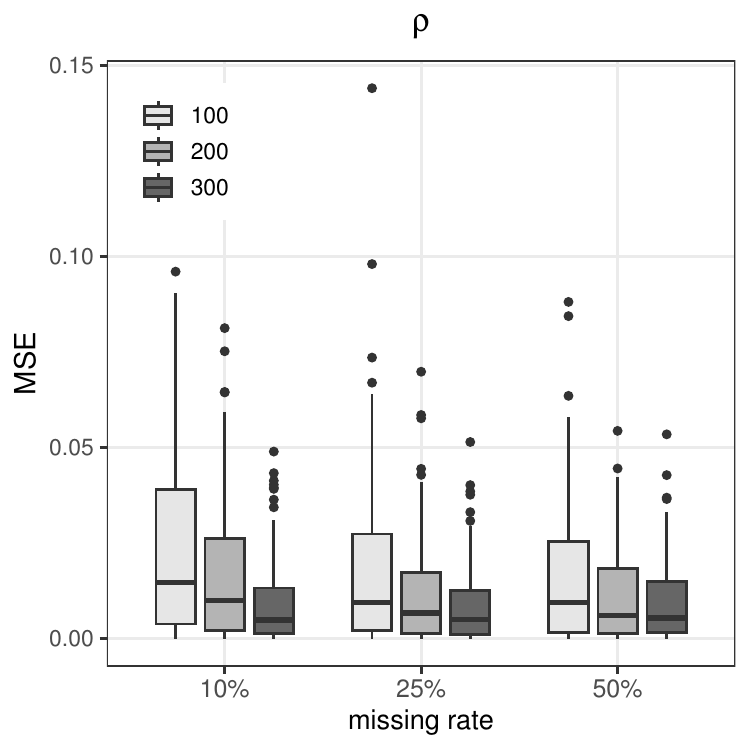}
    \includegraphics[width=0.32\textwidth]{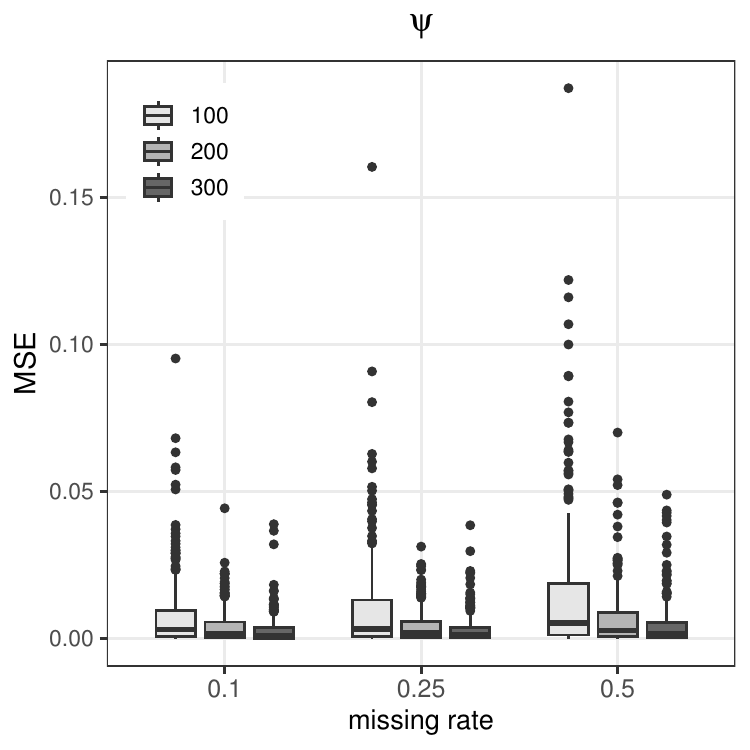}
    \caption{Boxplots of the MSE for $\sigma$, $\rho$, and $\phi$ based on 100 Monte Carlo replications across sample size and missing rate levels.}
    \label{fig:MSE}
  \end{subfigure}
\caption{Scenario~2. Assessment of finite sample properties for each parameter (Heterogeneous Covariance Structure).}
  \label{fig:finite_sample2}
\end{figure}

\paragraph{Scenario 2 (Heterogeneous covariance structure).}~\\
\begin{figure}[htb!]
  \centering
  %--- Top: Beta ---
  \begin{subfigure}[b]{0.95\textwidth}
    \centering
    \includegraphics[width=\textwidth]{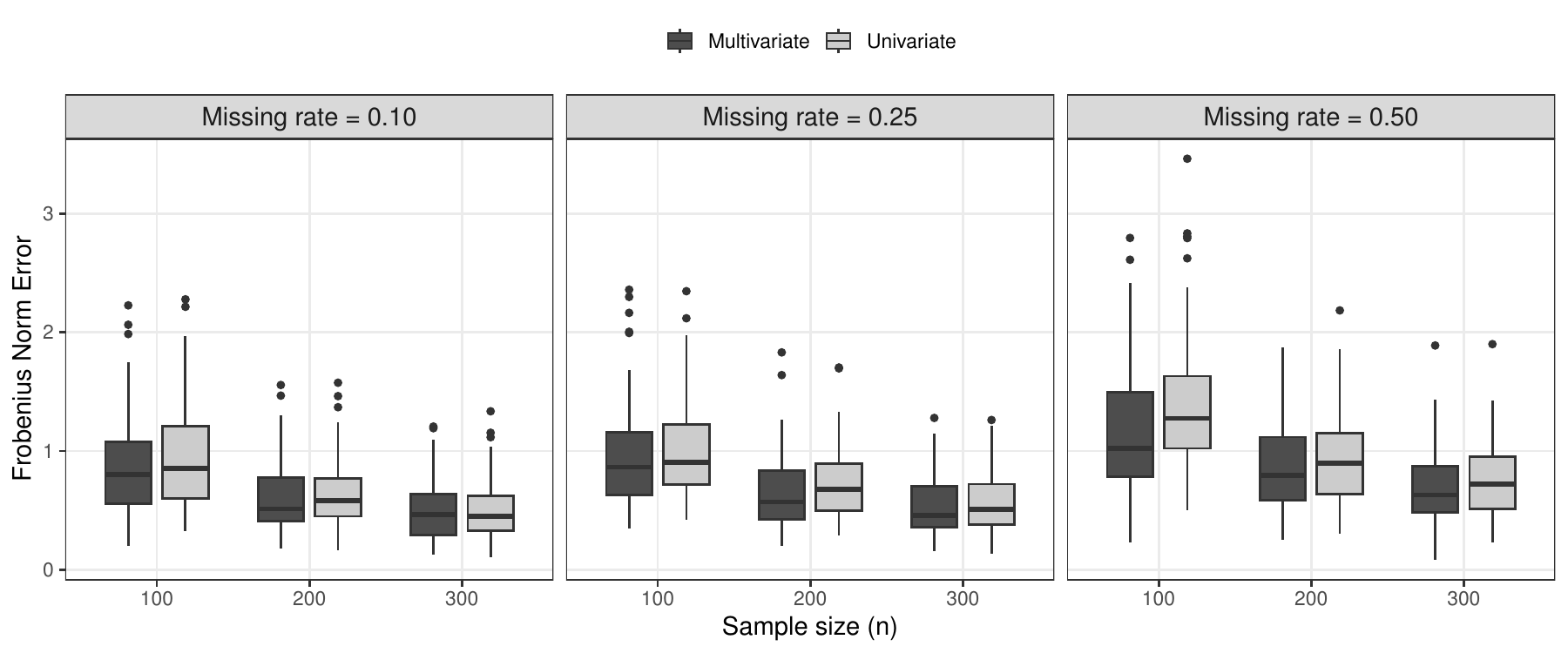}
    \caption{Frobenius norm errors for $\boldsymbol{B}$.}
  \end{subfigure} 
  \vspace{0.5em}
  %--- Bottom: Gamma ---
  \begin{subfigure}[b]{0.95\textwidth}
    \centering
    \includegraphics[width=\textwidth]{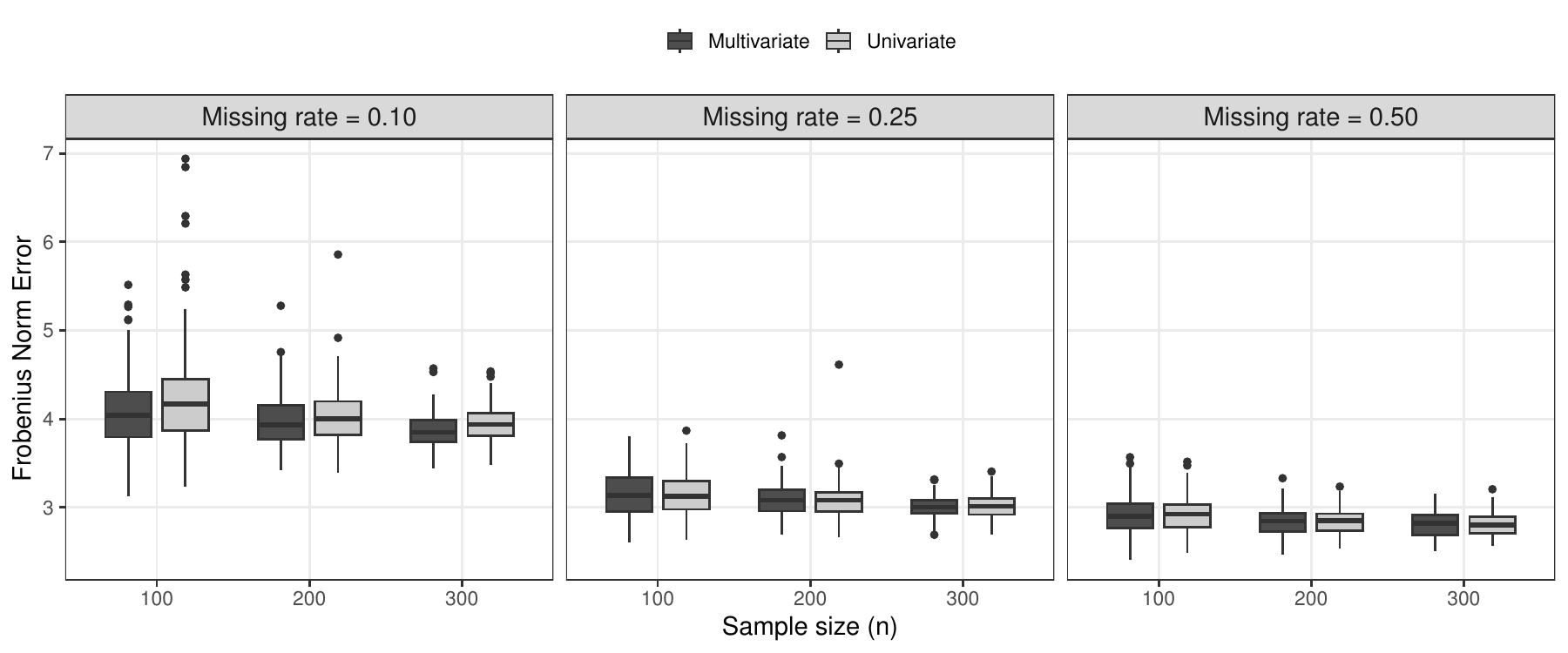}
    \caption{Frobenius norm errors for $\boldsymbol{\Gamma}$.}
  \end{subfigure}
  \caption{Scenario~1. Comparison between the univariate and multivariate models across three levels of sample size and missing rate.}
  \label{fig:uni_comparison_boxplot1}
\end{figure}

\begin{figure}[htb!]
  \centering
  %--- Top: Beta ---
  \begin{subfigure}[b]{0.95\textwidth}
    \centering
    \includegraphics[width=\textwidth]{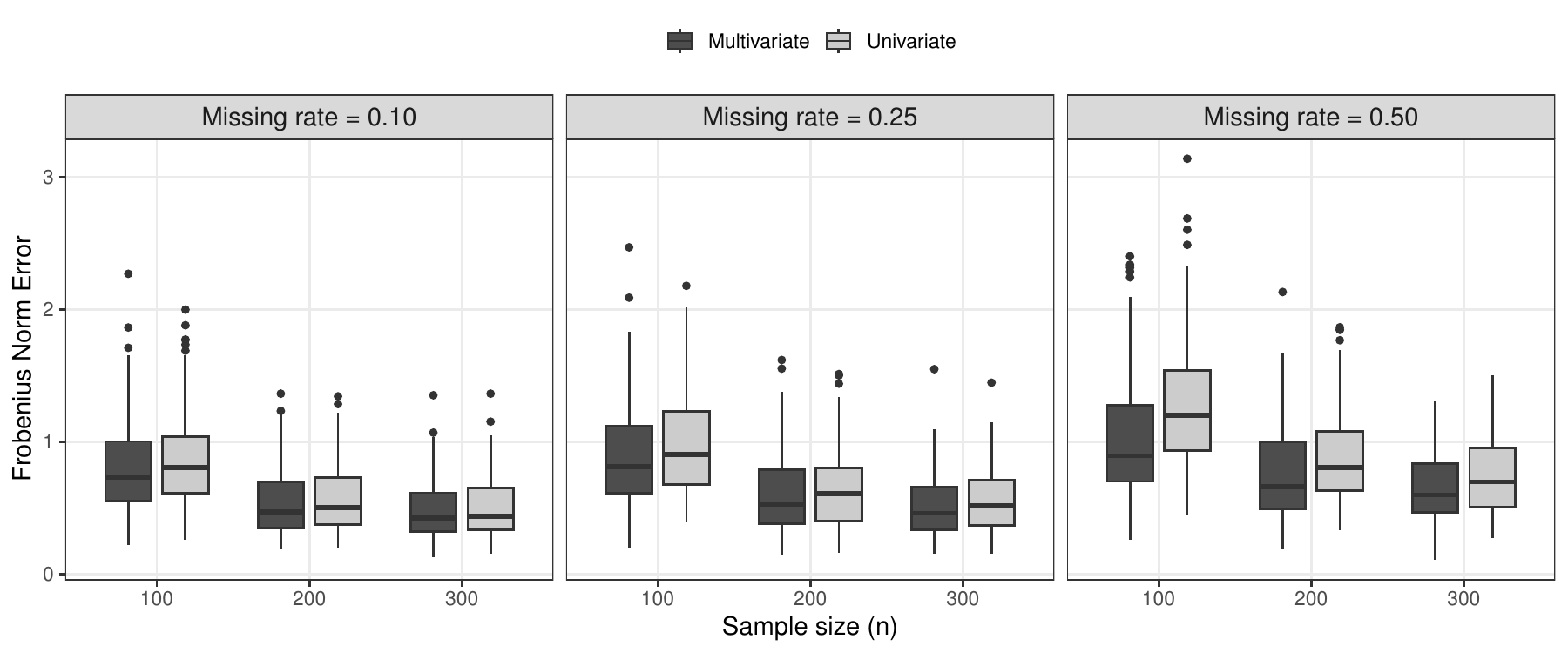}
    \caption{Frobenius norm errors for $\boldsymbol{B}$.}
  \end{subfigure} 
  \vspace{0.5em}
  %--- Bottom: Gamma ---
  \begin{subfigure}[b]{0.95\textwidth}
    \centering
    \includegraphics[width=\textwidth]{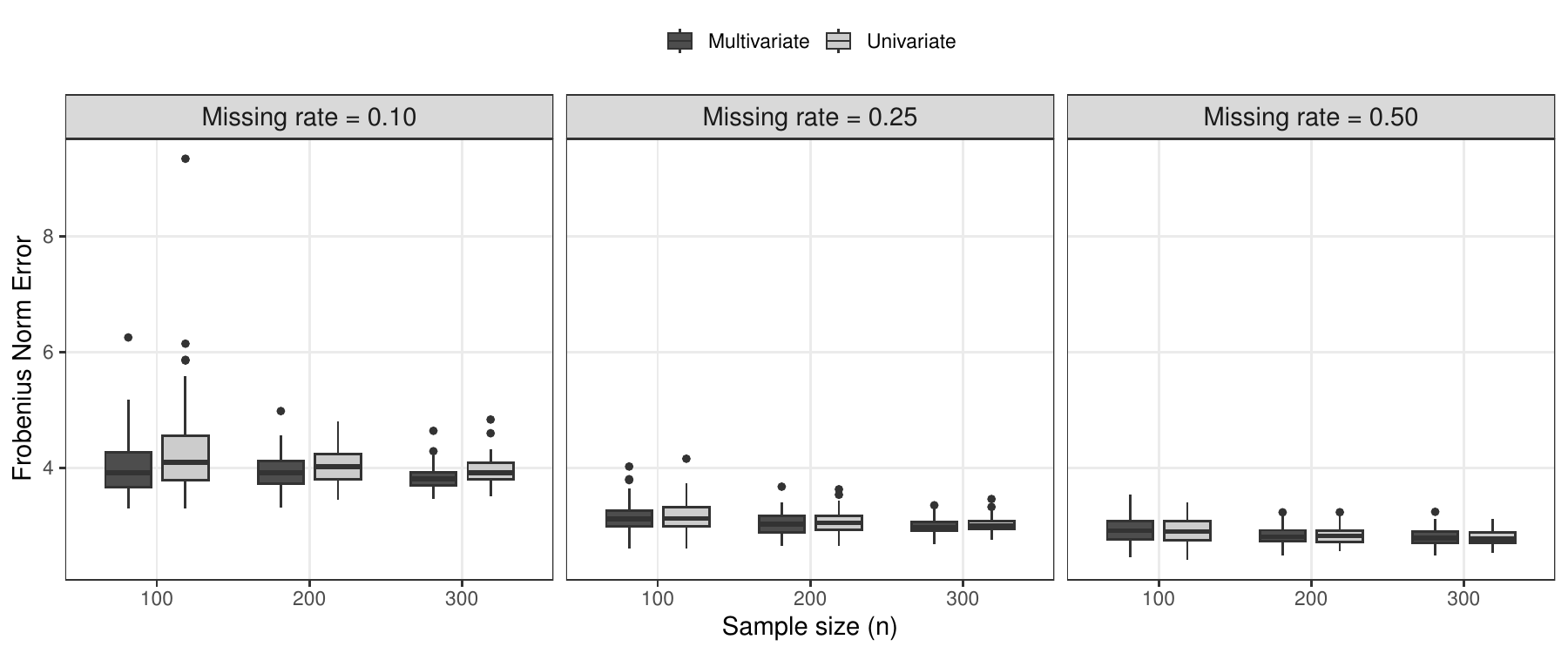}
    \caption{Frobenius norm errors for $\boldsymbol{\Gamma}$.}
  \end{subfigure}
  \caption{Scenario~2. Comparison between the univariate and multivariate models across three levels of sample size and missing rate.}
  \label{fig:uni_comparison_boxplot2}
\end{figure}

To further assess robustness, we consider an alternative setting where the outcome errors exhibit heterogeneous correlation. The outcome designs remain the same, but the correlation matrix $\boldsymbol{\Psi}$ is now specified as:
\[
\boldsymbol{\Psi} =
\begin{pmatrix}
1 & 0.7 & 0.4 \\
0.7 & 1 & 0.1 \\
0.4 & 0.1 & 1
\end{pmatrix}.
\]
This induces unequal correlation strengths across outcome pairs. All other components of the data-generating process, including covariate distributions, $\boldsymbol{B}$, $\boldsymbol{\Gamma}$, and $\boldsymbol{\Sigma}$, remain unchanged.

Figure~\ref{fig:finite_sample2} again shows the estimation results for Scenario~2. The general patterns remain consistent, larger sample sizes improve estimation accuracy, and higher missing rates degrade it. 
%However, compared to Scenario~1, estimation errors for $\boldsymbol{\Gamma}$ and $\phi$ tend to be slightly larger and more dispersed under the heterogeneous correlation structure. This reflects the increased complexity in modeling cross-outcome dependence when correlations are no longer uniform.

Overall, across both scenarios, the ECM estimator demonstrates robustness and efficiency, adapting well to both homogeneous and heterogeneous covariance structures. These findings support the practical utility of the proposed method in recovering parameters of the multivariate Heckman selection model under a wide range of conditions.

\subsection{Comparison with the univariate Heckman selection model}

To benchmark the proposed approach, we also implement the univariate Heckman selection model separately for each outcome variable under the same simulation settings. This comparison allows us to highlight the efficiency gains achieved by jointly modeling the multivariate outcomes. We compare our estimator with univariate fits obtained from the \texttt{HeckmanEM} to assess the benefit of incorporating outcome correlations through the covariance matrix. For each scenario, we simulate 100 replicates and apply the proposed estimation algorithm. The performance is evaluated using the Frobenius norm error between the estimated and true covariance matrices, defined as $\widetilde{\boldsymbol{B}}$, $\widetilde{\boldsymbol{\Gamma}}$.
%$\text{Frobenius}(\widehat{\boldsymbol{\Sigma}}, \boldsymbol{\Sigma}) = \|\widehat{\boldsymbol{\Sigma}} - \boldsymbol{\Sigma}\|_F, \text{Frobenius}(\widehat{\boldsymbol{\Psi}}, \boldsymbol{\Psi}) = \|\widehat{\boldsymbol{\Psi}} - \boldsymbol{\Psi}\|_F$.

Figure \ref{fig:uni_comparison_boxplot1} compares the Frobenius norm errors of the regression parameters $\boldsymbol{B}$ and $\boldsymbol{\Gamma}$ between the univariate and multivariate Heckman selection models across different sample sizes and missing rates. For the outcome parameters $\boldsymbol{B}$, the proposed multivariate model (dark boxes) consistently yields smaller estimation errors than the univariate counterpart (light boxes). The errors decrease as the sample size increases, reflecting improved estimation precision with more information, while higher missing rates lead to modest increases in error. These results indicate that jointly modeling multiple outcomes substantially enhances estimation efficiency for the outcome equations. 

For the selection parameters $\boldsymbol{\Gamma}$, both models show increasing errors as the missing rate rises. This pattern is expected because the selection equation is directly affected by the degree of missingness, making it more difficult to accurately recover the selection mechanism when fewer observations are available. A similar trend has also been reported for the univariate Heckman model in previous studies \citep[e.g.,][]{lim2025heckman, lachos2021heckman}. Nevertheless, the proposed multivariate model maintains lower errors overall, demonstrating robustness in estimating the selection process under incomplete data.

The simulation results confirm that the proposed multivariate Heckman selection model performs robustly under realistic data-generating conditions. 
The model consistently achieves lower estimation errors than the univariate counterpart. These findings demonstrate its advantage over naive approaches that ignore selection bias, highlighting the benefit of jointly modeling multiple correlated outcomes.

%Notably, under Scenario~2, the estimation errors for the selection parameters $\boldsymbol{\Gamma}$ and the correlation structure $\boldsymbol{\Psi}$ are noticeably larger and more dispersed than in Scenario~1, especially at smaller sample sizes and higher missing rates. The Frobenius norm errors for $\boldsymbol{\Gamma}$ show increased variance, and the MSE for the $\phi$ parameter demonstrates a marked rise under heterogeneous correlations. In contrast, the estimation of $\sigma$ and $\rho$ remains relatively stable across scenarios. These differences highlight the increased challenge of accurately recovering heterogeneous correlation structures, yet the ECM estimator still maintains satisfactory performance across all conditions.

\section{Application}

In this section, we present two empirical applications to evaluate the performance of the multivariate Heckman selection model using real-world data. Parameter estimation is conducted using the \texttt{mvHeckman} package developed in \textsf{R}. 

To assess estimation accuracy, we compare the estimated mean matrices obtained from the multivariate model with those from the corresponding univariate models. Furthermore, we implement a bootstrap procedure with 200 replications to estimate the empirical standard errors of the model parameters. 

This approach enables us to quantify the variability of the estimators and to examine the robustness and stability of the proposed method in practical settings.

\subsection{Mroz: Labor supply data}
In this study, we extend the classical econometric dataset of \citet{mroz1984sensitivity} to a multivariate framework. This dataset comprises observations on 753 married white women and their husbands, including detailed information on labor market participation, demographic characteristics, and socioeconomic variables. A key feature of this dataset is the presence of non-random missingness arising from labor force non-participation. Both wages and working hours are observed only for women who participate in the labor market, leading to a sample selection problem. In this sample, the logarithm of wage is missing for 325 individuals (43.2\%) and observed for 428 individuals (56.8\%), with working hours observed for the same subset of individuals.

It has been previously analyzed using both the univariate Heckman selection model \citep{lim2025heckman} and the multivariate contaminated normal censored regression model \citep{lin2025finite}. However, to the best of our knowledge, the multivariate Heckman selection model has not yet been applied to this dataset. To account for the joint determination of wages and working hours, we adopt a bivariate modeling framework that captures dependence between the outcomes while correcting for selection bias induced by labor force participation. We aim to examine how such dependence affects the mean and variability of the estimators, in comparison to the univariate approach.

%Using standard notation, our analysis focuses on the joint modeling of spousal wages based on the classical dataset of \citet{mroz1984sensitivity}. We consider a bivariate response structure to capture the dependence between the wage and hours, accounting for labor force participation and potential selection bias.

%Figure~\ref{fig:ggpairs} presents a pairwise plot of the variables \texttt{wage} and \texttt{hours}, stratified by labor force participation status. The plot includes scatterplots, histograms, and Pearson correlation coefficients. Notably, the overall correlation between wage and hours appears moderately positive (Corr = 0.423), suggesting that higher wages are associated with longer working hours. However, when restricted to individuals who actually participated in the labor force (\texttt{participation} = \texttt{yes}), the correlation becomes weakly negative (Corr = -0.098). This reversal highlights the presence of selection bias: the relationship between observed wages and hours is conditional on participation, and failing to account for this can lead to misleading conclusions about labor supply behavior.

For this empirical application, we consider $R = 2$ outcomes. 
Specifically, let $y_{1i} = \log(\texttt{wage}_i)$ denote the wife's hourly wage (in 1975 U.S. dollars), 
and $y_{2i} = \log(\texttt{hours}_i)$ denote the wife's total annual hours worked.
For $r = 1,2$, the covariate vectors for the outcome equations are defined as
$\boldsymbol{x}_{ri} = (1, \texttt{educ}_i, \texttt{city}_i)$.
The corresponding selection covariates are given by
$\boldsymbol{w}_{ri} = (\boldsymbol{x}_{ri}, \texttt{hwage}_i, \texttt{youngkids}_i, \texttt{tax}_i, \texttt{feduc}_i).$

To obtain standard errors for the estimated parameters, we implemented a nonparametric bootstrap procedure. Bootstrap standard errors were computed as the empirical standard deviations of the bootstrap estimates, and $95\%$ confidence intervals were constructed using the percentile method based on the $2.5\%$ and $97.5\%$ quantiles of the bootstrap distributions. For each replication, we computed the parameter estimates, and the standard errors were calculated as the standard deviations of the 200 bootstrapped estimates. The estimated parameters of the mean matrix $\mathbf{M}$ for both models, UH and MH, are presented below.

{\footnotesize
\[
\mathbf{M}_{\text{UH}} =
\begin{array}{c@{\qquad}c}
    &
    \begin{array}{cc@{\qquad}cc}
        \texttt{log(wage)} &  & \texttt{log(hour)} &  \\
        \texttt{ME} & \texttt{SE} & \texttt{ME} & \texttt{SE}
    \end{array}
    \\[8pt]

    \begin{array}{c}
        \textbf{Outcome model} \\
        \texttt{Intercept} \\
        \texttt{educ} \\
        \texttt{city} \\
        \textbf{Selection model} \\
        \texttt{Intercept} \\
        \texttt{hwage} \\
        \texttt{youngkids} \\
        \texttt{tax} \\
        \texttt{feduc} \\
        \texttt{educ} \\
        \texttt{city}
    \end{array}
    &
    \left[
    \begin{array}{cc}
    \\
        0.669 & 0.239 \\
        0.066 & 0.018 \\
        0.107 & 0.082 \\
         &  \\
        3.802 & 0.764 \\
        -0.104 & 0.015 \\
        -0.415 & 0.078 \\
        -5.782 & 0.847 \\
        -0.020 & 0.013 \\
        0.112 & 0.024 \\
        -0.040 & 0.107
    \end{array}
    \right]
    \!\!
    \left[
    \begin{array}{cc}
    \\
        9.071 & 0.299 \\
        -0.120 & 0.023 \\
        0.062 & 0.110 \\
         &  \\
    1.950 & 0.520 \\
        -0.074 & 0.010 \\
        -0.209 & 0.042 \\
        -3.622 & 0.521 \\
        -0.015 & 0.009 \\
        0.113 & 0.024 \\
        -0.034 & 0.096
    \end{array}
    \right]
\end{array}
\]
}
{\footnotesize
\[
\mathbf{M}_{\text{MH}} =
\begin{array}{c@{\qquad}c}
    &
    \begin{array}{ccc|ccc}
        \multicolumn{3}{c|}{\texttt{log(wage)}} & \multicolumn{3}{c}{\texttt{log(hour)}} \\
        \makebox[2.5em][c]{\texttt{ME}} & 
        \makebox[2.5em][c]{\texttt{SE}} & 
        \makebox[6em][c]{\texttt{CI}} & 
        \makebox[2.5em][c]{\texttt{ME}} & 
        \makebox[2.5em][c]{\texttt{SE}} & 
        \makebox[6em][c]{\texttt{CI}}
    \end{array}
    \\[10pt]
    \begin{array}{c}
        \textbf{Outcome model} \\
        \texttt{Intercept} \\
        \texttt{educ} \\
        \texttt{city} \\
        \textbf{Selection model} \\
        \texttt{Intercept} \\
        \texttt{hwage} \\
        \texttt{youngkids} \\
        \texttt{tax} \\
        \texttt{feduc} \\
        \texttt{educ} \\
        \texttt{city}
    \end{array}
    &
    \left[
    \begin{array}{ccc|ccc}
        -0.461 & 0.220 &(-0.86, -0.01) & 6.812 & 0.299 &(6.55, 7.7) \\
        0.123 & 0.014 &(0.09, 0.15) & -0.015 & 0.022  &  (-0.07, 0.01)\\
        0.047 & 0.056 &(-0.05, 0.15) & -0.037 & 0.097 & (-0.19, 0.17) \\
        &&&\\
        2.722& 1.189 &(2.03, 6.7)& 3.896& 1.175 & (0.78, 5.52)\\
        -0.101 & 0.024 &(-0.18, -0.09)&-0.157 & 0.025&  (-0.16, -0.07)  \\
        -0.434 & 0.113 &(-0.81, -0.34)&-0.785 & 0.123 &  (-0.79, -0.31)  \\
        -4.419 & 1.307 &(-9.23, -4.03)&-6.799 & 1.324 & (-7.86, -2.64) \\
        -0.007 & 0.017 &(-0.04, 0.02)&-0.005 & 0.018 & (-0.04, 0.03)  \\
        0.106 & 0.031 &(0.07, 0.18)&0.203 & 0.030 &  (0.07, 0.19)  \\
        -0.004 & 0.110 &(-0.24, 0.18)& -0.059 & 0.109 & (-0.28, 0.17) \\
    \end{array}
    \right]
\end{array}
\]
}

Both the UH and MH models indicate that education has a significantly positive effect on wages. 
However, the estimated effect of education on hours worked becomes much smaller in the multivariate model compared to the univariate case. Specifically, the coefficient decreases from $-0.120$ (SE = 0.023) in the univariate model to $-0.028$ (SE = 0.022) in the multivariate model. A similar pattern is observed for the city variable, which indicates whether the respondent resides in an urban area (1 = city, 0 = non-city). Its estimated effect on hours worked changes from $0.062$ (SE = 0.110) in the univariate model to $-0.024$ (SE = 0.097) in the multivariate model. In both cases, the estimates are not statistically significant, suggesting that urban residence does not have a meaningful association with annual hours worked when controlling for other factors. Furthermore, the selection equations reveal consistent patterns: women with young children, higher spousal income, and higher tax burdens are less likely to participate in the labor force, while higher levels of own education are associated with greater participation.

To examine the dependence structure between outcomes and selection, we compare two correlation matrices obtained from the multivariate Heckman model: $\Sigma$ and $\Psi$. According to $\Sigma$, the covariance between $\log(\texttt{wage})$ and $\log(\texttt{hours})$ is $0.180$, with a correlation coefficient of $\rho = 0.185$ and a standard deviation of $\sigma = 0.975$, indicating a modest positive association. In contrast, the selection correlation $\Psi_{12} = 0.605$ reveals a much stronger dependence between the outcomes driving labor force participation. These results underscore the importance of modeling joint selection to avoid biased inference.

{\footnotesize
\[
\begin{array}{c}
\begin{array}{@{}c@{}c@{}c@{}|@{}c@{}c@{}c@{}}
\makebox[3em][r]{\texttt{ME}} & 
\makebox[3em][r]{\texttt{SE}} & 
\makebox[6em][c]{\texttt{CI}} & 
\makebox[3em][r]{\texttt{ME}} & 
\makebox[3em][r]{\texttt{SE}} & 
\makebox[6em][c]{\texttt{CI}}
\end{array}
\\[4pt]
\Sigma =
\begin{pmatrix}
0.667 & 0.11 & (0.77,1.16) & 0.239 & 0.10 & (-0.02,0.34) \\
0.239 & 0.10 & (-0.02,0.34) & 1 & & 
\end{pmatrix}
\\[12pt]
\begin{array}{@{}c@{}c@{}c@{}|@{}c@{}c@{}c@{}}
\makebox[3em][r]{\texttt{ME}} & 
\makebox[3em][r]{\texttt{SE}} & 
\makebox[6em][c]{\texttt{CI}} & 
\makebox[3em][r]{\texttt{ME}} & 
\makebox[3em][r]{\texttt{SE}} & 
\makebox[6em][c]{\texttt{CI}}
\end{array}
\\[4pt]
\Psi =
\begin{pmatrix}
0.655 & 0.04 & (0.59,0.72) & 0.718 & 0.06 & (0.60,0.82) \\
0.718 & 0.06 & (0.60,0.82) & 2.208 & 0.19 & (1.85, 2.57)
\end{pmatrix}
\end{array}
\]
}

%The MH model allows for simultaneous estimation while accounting for residual correlation between spouses’ wages, potentially improving inference over the UH specification.

%This specification allows the model to capture heterogeneous effects across the two outcomes while accounting for residual correlation between them through the covariance matrix \( \boldsymbol{\Sigma} \).

\begin{figure}[htbp]
    \centering
    \includegraphics[width=0.95\textwidth]{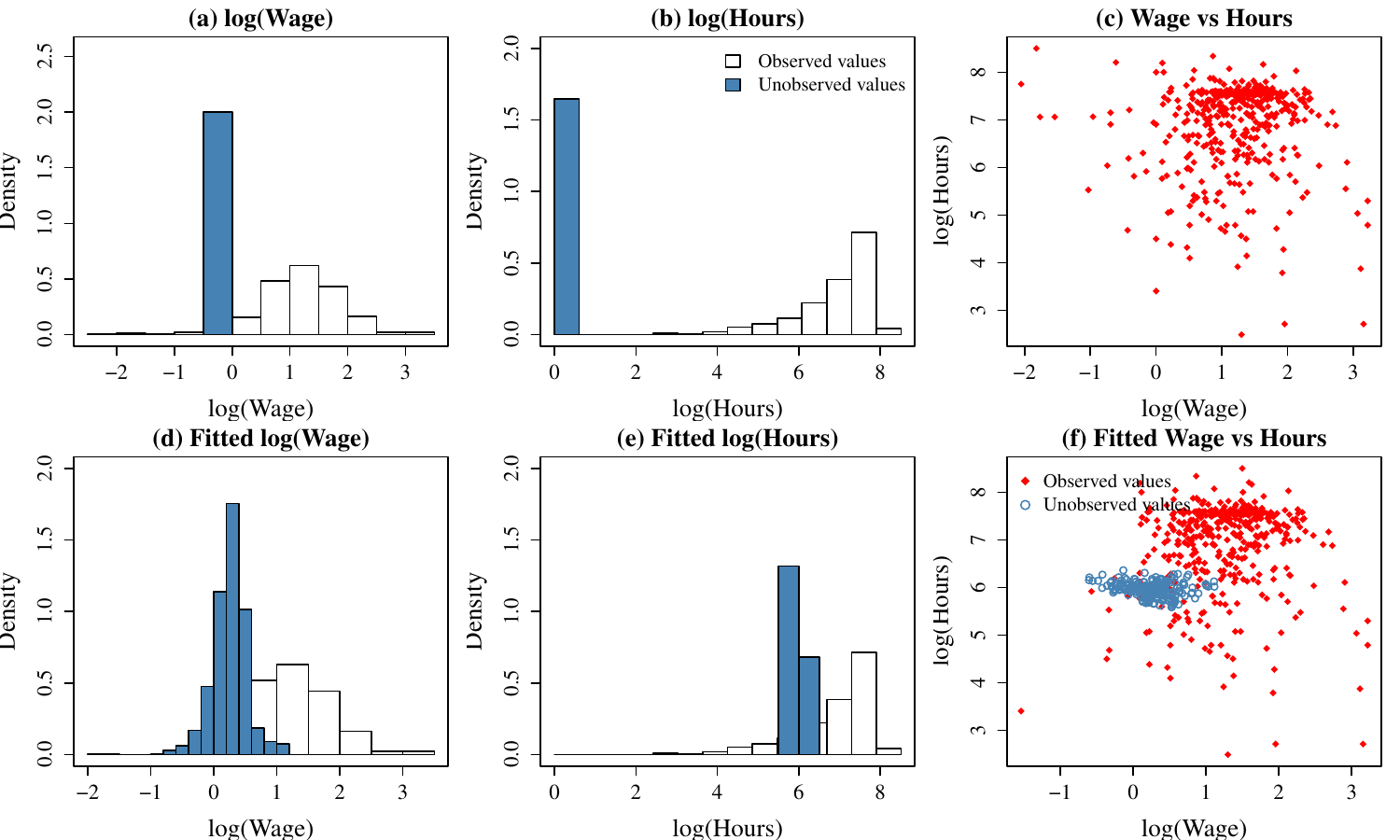}
\caption{
Distribution of observed and censored outcomes and model-implied fitted values. 
Panels (a)–(b) show the distributions of $\log(\text{Wage})$ and $\log(\text{Hours})$ 
for observed and censored observations, while panels (d)–(e) display the fitted distributions from the multivariate Heckman selection model. Panels (c)–(f) present the relationships between wage and hours.
}
\label{fig:ggpairs}
\end{figure}

%Figure~\ref{fig:ggpairs} presents pairwise plots of the variables \texttt{wage} and \texttt{hours}. Since these outcomes are only observed for women who participate in the labor force, the displayed relationship is conditional on labor force participation. The observed correlation between wage and hours is moderately positive (Corr = 0.423), indicating that individuals with higher wages tend to work longer hours in the observed sample.

%However, this relationship should not be interpreted as reflecting the unconditional association in the population. Because wages and hours are observed only for participants, the joint distribution of these variables is subject to sample selection. In particular, individuals with low potential wages or strong household constraints may choose not to participate in the labor market, implying that the observed correlation is shaped by the participation decision. This highlights the importance of explicitly modeling the selection mechanism when analyzing the joint distribution of wages and hours.

Figure~\ref{fig:ggpairs} illustrates the distributions and relationships of the outcome variables under sample selection. Panels (a) and (b) display the empirical distributions of $\log(\text{Wage})$ and $\log(\text{Hours})$ for observed and censored observations. For non-participants, the outcomes are censored and recorded as zeros in the labor market and therefore appear as zeros in the observed data. Panels (d) and (e) present the distributions of the fitted and imputed outcomes obtained from the multivariate Heckman selection model. Compared with the observed distributions, the model-implied outcomes recover a broader range of the latent outcome distribution. Panels (c) and (f) show the relationships between wage and hours for the observed and model-implied outcomes. 
The observed sample exhibits only a weak marginal correlation between the two variables. However, this correlation is calculated conditional on labor market participation and therefore does not reflect the underlying dependence structure in the population. The multivariate selection model accounts for this sample selection and captures the dependence between the outcome equations through the latent error correlation.

Throughout this analysis, we aim to assess whether the proposed multivariate Heckman selection model provides a more accurate and robust characterization of the joint wage process compared to the traditional univariate specification. This specification allows the model to capture heterogeneous effects across the two outcomes while accounting for correlation between them. 

\subsection{NHANES: Blood pressure and health behavior}

To demonstrate the generalizability of our model to biomedical applications, we use publicly available data from the National Health and Nutrition Examination Survey (NHANES) conducted by the Centers for Disease Control and Prevention (CDC) \citep{nhanes}. The dataset includes physical examination and lifestyle information for a nationally representative sample of U.S. adults.

Our primary outcomes are systolic and diastolic blood pressure (SBP and DBP, respectively), measured during clinical visits. Missingness in these variables is substantial due to non-participation or measurement error, making them suitable for selection modeling. For this application, we define the bivariate response as $y_{1i} = \texttt{SBP}_i$ and $y_{2i} = \texttt{DBP}_i$. For $r = 1,2$, the covariates in the outcome equation include age (years), gender (male = 1), race/ethnicity (non-White = 1), body mass index (BMI), and the family poverty-income ratio (PIR). The selection equation includes age, gender, race/ethnicity, and PIR as covariates.

The estimated parameters of the mean matrix $\mathbf{M}$ for both models are summarized below. For SBP, age exhibits the strongest positive association, consistent with well-established clinical evidence that blood pressure increases with aging. Gender and race are also significant predictors, with males and non-White individuals showing higher SBP on average. BMI is positively associated with both SBP and DBP, reflecting its role as a key metabolic risk factor. PIR shows relatively modest effects.

Comparing UH and MH estimates reveals notable differences in both magnitude and uncertainty. In particular, the multivariate model generally yields more stable estimates with reduced variability, reflecting efficiency gains from jointly modeling correlated outcomes. Similar patterns are observed for DBP, although the magnitude of covariate effects differs, indicating outcome-specific relationships.

%BMI was included only in the selection model as an exclusion restriction. This variable affects the likelihood of having valid blood pressure measurements due to behavioral or physical factors but is assumed to have no direct effect on the mean structure of SBP and DBP after adjusting for age, gender, race, and PIR.

%For the selection equation, we include age, gender, race/ethnicity, and PIR, but exclude BMI. The missingness of outcomes in this dataset is largely attributable to behavioral, demographic, and environmental factors rather than to metabolic status such as BMI. Therefore, BMI is excluded from the selection equation.

%The estimated parameters of the mean matrix $\mathbf{M}$ for both models are summarized below. For SBP, age exhibits the strongest association, with an estimated increase of approximately $..$ mmHg per unit increase in the age covariate, reflecting the typical rise in blood pressure associated with aging. Both gender and race are also significant predictors; males and non-White individuals tend to have higher SBP. PIR shows a small positive association. Also, we obtained the standard errors using the same nonparametric bootstrap procedure described in Section 6.1

{\footnotesize
\[
\mathbf{M}_{\text{UH}} =
\begin{array}{c@{\qquad}c}
    &
    \begin{array}{cc@{\qquad}cc}
        \texttt{SBP} &  & \texttt{DBP} &  \\
        \texttt{ME} & \texttt{SE} & \texttt{ME} & \texttt{SE}
    \end{array}
    \\[8pt]

    \begin{array}{c}
        \textbf{Outcome model} \\
        \texttt{Intercept} \\
        \texttt{age} \\
        \texttt{gender} \\
        \texttt{non-White} \\
        \texttt{PIR} \\
        \texttt{BMI} \\
        \textbf{Selection Model} \\
        \texttt{Intercept} \\
        \texttt{age} \\
        \texttt{gender} \\
        \texttt{non-White} \\
        \texttt{PIR} \\
    \end{array}
    &
    \left[
    \begin{array}{cc}
    \\
    77.24&0.96\\
    0.61&0.01\\
    3.47&0.47\\
    3.23&0.50\\
   -0.45&0.15\\
    0.40&0.02\\
    \\
    0.05&0.05\\
    0.00&0.00\\
    0.03&0.03\\
    0.35&0.04\\
    0.11&0.01
    \end{array}
    \right]
    \!\!
    \left[
    \begin{array}{cc}
    \\
    55.69 & 0.94\\
    0.08 & 0.01\\
    1.65 &0.47\\
    1.98 &0.49\\
    0.28 &0.15\\
    0.39 &0.03\\
    \\
    -0.02 &0.05\\
    0.02&0.00\\
    0.09&0.03\\
    0.03&0.04\\
    0.00&0.01
    %sigma
    \end{array}
    \right]
\end{array}
\]
}

{\footnotesize
\[
\mathbf{M}_{\text{MH}} =
\begin{array}{c@{\qquad}c}
    &
    \begin{array}{cc}
        \begin{array}{cc}
            \texttt{SBP} & \\
            \texttt{ME} & \texttt{SE}
        \end{array}
        &
        \begin{array}{cc}
            \texttt{DBP} & \\
            \texttt{ME} & \texttt{SE}
        \end{array}
    \end{array}
    \\[10pt]
    
    \begin{array}{c}
        \textbf{Outcome model} \\
        \texttt{Intercept} \\
        \texttt{age} \\
        \texttt{gender} \\
        \texttt{non-White} \\
        \texttt{PIR} \\
        \texttt{BMI} \\
        \textbf{Selection Model} \\
        \texttt{Intercept} \\
        \texttt{age} \\
        \texttt{gender} \\
        \texttt{non-White} \\
        \texttt{PIR} \\
    \end{array}
    &
    \left[
    \begin{array}{ccc|ccc}
    &&&\\
82.39& 0.93&(82.21,85.75) & 44.39& 0.78&(43.18,46.07))\\
0.57 & 0.01&(0.54,0.59) & 0.22  &0.01&(0.20,0.24)  \\
3.07 & 0.41 &(2.28,3.72) & 2.38 &0.31&(1.84,2.99)  \\
3.29 & 0.41 &(2.33,3.88) & 2.71  &0.34&(1.98,3.26)  \\
-0.61& 0.13&(-0.90,-0.43) & 0.46  &0.10&(0.26,0.66) \\ 
0.37 &0.03 &(0.30,0.42) &0.34 & 0.02&(0.29,0.38) \\
\\
-0.28 &0.05& (-0.14,0.05)&-0.65& 0.05&(-0.21,-0.02)\\
0.31 & 0.00&(0.02,0.03) &0.22 & 0.00&(0.02,0.02) \\
1.08 & 0.04 &(0.01,0.15) &0.94 & 0.04 &(0.02,0.16)\\
1.40 & 0.04 &(0.03,0.16) &1.02 & 0.04 &(0.02,0.15)\\
0.15 & 0.01 &(-0.01,0.03) &0.15 &0.01 &(-0.01,0.04) \\
    \end{array}
    \right]
\end{array}
\]
}

{\footnotesize
\[
\Sigma = 
\begin{pmatrix}
1.85 &0.05&(0.59,0.72))&0.65&0.10&(0.60,0.82)\\
0.65&0.1&(0.60,0.82)&1&0&(1.85,2.57)\\
\end{pmatrix}
  \quad
\]
\[
\Psi= 
\begin{pmatrix}
159.10 & 0.04 &(0.58,0.72) & 77.33& 0.06 &(0.60,0.82) \\
77.33 &0.06&(0.60,0.82) & 99.17&0.19& (1.85,2.57)
\end{pmatrix}
\]
}
%For DBP, the estimated effects follow a different pattern. Age is positively associated with DBP but with a smaller magnitude (approximately mmHg), consistent with the known physiological behavior that DBP increases until mid-adulthood  and decreases later in life.  BMI shows a positive effect on DBP, in contrast to its effect on SBP, suggesting stronger metabolic contributions. Gender and race exhibit similar directional effects as in SBP but with reduced magnitude.

\begin{figure}[htbp]
    \centering
    \includegraphics[width=0.95\textwidth]{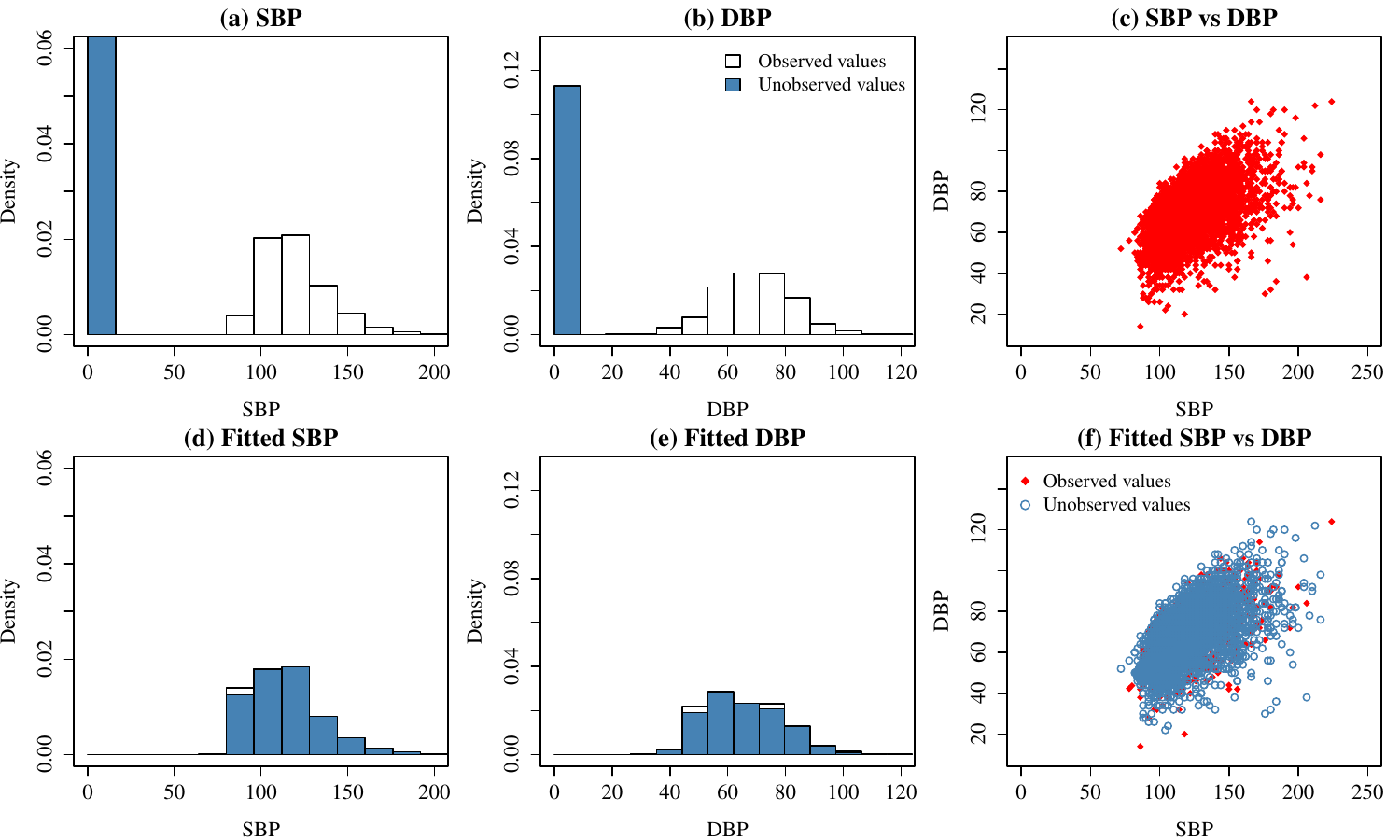}
\caption{
Distribution of observed and censored outcomes and model-implied fitted values. 
Panels (a)–(b) show the distributions of $\log(\text{Wage})$ and $\log(\text{Hours})$ 
for observed and censored observations, while panels (d)–(e) display the fitted distributions from the multivariate Heckman selection model. Panels (c)–(f) present the relationships between wage and hours.
}
\label{fig:application2}
\end{figure}

Figure~\ref{fig:application2} provides additional insight into the distributional features of the data and model fit. Panels (a)–(b) display the distributions of observed and censored outcomes, illustrating the extent of missingness and potential selection bias. Panels (d)–(e) show the fitted distributions under the multivariate Heckman model, demonstrating good agreement with the observed data. Panels (c)–(f) depict the relationship between SBP and DBP, highlighting their positive association and supporting the use of a joint modeling framework.

Overall, the multivariate Heckman model effectively accounts for both the dependence between SBP and DBP and the selection mechanism governing missingness. Compared to separate univariate analyses, the proposed approach provides improved estimation efficiency and a more coherent interpretation of the joint behavior of the outcomes.

%The multivariate framework allows simultaneous estimation of the covariance structure between SBP and DBP while correcting for selection bias due to missing measurements. In this application, the estimated error correlation between SBP and DBP was positive, supporting the clinical understanding that the two outcomes are physiologically linked but not interchangeable. Modeling SBP and DBP jointly yields improved efficiency relative to fitting separate univariate Heckman models and provides a coherent interpretation of their shared and outcome-specific determinants.

%This analysis demonstrates the importance of jointly modeling systolic and diastolic blood pressure in the presence of nonrandom missingness. The proposed multivariate Heckman approach captures both the selection mechanism and the outcome correlation structure, leading to interpretable coefficient estimates and more robust inference for epidemiological applications.

%As in the previous application, we estimate both univariate and multivariate Heckman selection models using a bootstrapped EM procedure, and compare the coefficients and their standard errors across specifications. The goal is to assess whether accounting for cross-outcome dependence improves inference in medical survey data.

\section{Conclusion}

In this paper, we proposed a matrix-variate extension of the classical Heckman selection model to address sample selection bias in the presence of multiple correlated outcomes. By incorporating a separable covariance structure, the proposed framework captures dependence across outcomes while maintaining computational tractability. Parameter estimation was carried out via an ECM algorithm, which demonstrated stable convergence behavior in simulation studies.

Simulation studies under various configurations confirmed that the proposed ECM estimator accurately recovers model parameters, with estimation errors decreasing as sample size increases and increasing with higher missing rates. Although the magnitude of improvement over classical univariate Heckman models was not substantial across all settings, the proposed model consistently achieved lower estimation errors, underscoring the value of jointly modeling multiple outcomes under selection bias.

Empirical analyses based on two real-data applications further illustrated the practical utility of the proposed approach, showing that accounting for cross-outcome dependence can meaningfully influence the estimated effects and their variability. To facilitate practical use, we developed the R package \texttt{mvHeckman}, available on \texttt{GitHub} at \url{https://github.com/heeju-lim/mvHeckman}
, which provides a unified framework for estimation, inference, and simulation of the proposed model.

Looking ahead, future research should extend the current framework to more flexible matrix-variate distributions. In particular, the assumption of matrix-variate normality may limit robustness when the data exhibit departures from symmetry, such as skewness, heavy tails, or multimodality. Promising directions include the development of models based on skewed or heavy-tailed matrix-variate distributions \citep{mahdavi2024robust}, as well as mixture-based formulations that can better capture heterogeneous subpopulations \citep{lachos2017finite, mazza2020mixtures, lin2025finite}. 

%Another important limitation of the proposed framework is that it relies on a selection model formulation. In this approach, the selection mechanism must be explicitly specified, and incorrect specification may lead to biased results. In contrast, pattern-mixture models \citep{little1993pattern} take a different perspective by modeling the outcome distribution separately for different missingness patterns, without requiring a fully specified selection mechanism. Because of this, pattern-mixture approaches can be more robust when the selection process is complex or difficult to model. Therefore, the proposed model may be particularly sensitive to misspecification of the selection equation, especially when the underlying missing data mechanism deviates from the assumed structure. Compared to selection models, pattern-mixture approaches may offer greater robustness to misspecification, whereas selection models provide a more explicit structural interpretation of the selection mechanism. Future research could systematically compare the proposed selection model with alternative frameworks, such as pattern-mixture models, to evaluate their relative performance under different missing data mechanisms and model specifications.

In addition to distributional extensions, Bayesian approaches represent a promising direction for extending the proposed framework beyond the ECM algorithm. While the ECM approach provides computational efficiency through closed-form updates, it may face limitations in fully capturing parameter uncertainty. Recent work by \citet{lim2026bayesian} demonstrates the potential of Bayesian methodologies in the context of univariate Heckman selection models, particularly through the incorporation of prior information, posterior-based inference, and the use of scale mixture representations to accommodate more flexible distributional structures, thereby enabling more comprehensive uncertainty quantification.

Overall, the proposed framework provides a structured and flexible foundation for modeling multiple outcomes under selection bias, and further methodological developments along these directions may substantially broaden its applicability and provide a more robust framework for analyzing complex data environments.

\section*{Appendix}
\subsection*{Proofs of propositions and theorems}

\subsubsection*{Proof of Proposition~\ref{prop:SUN}}
\begin{proof}
The matrix--variate normal specification implies
\[
\boldsymbol{\mathcal{Y}}_i \mid \mathbf{Z}_i
\sim
\mathcal{N}_{2\times R}\bigl(
\mathbf{M}_i=\mathbf{Z}_i\mathbf{B},
\boldsymbol{\Sigma},
\boldsymbol{\Psi}
\bigr),
\]
which is equivalent to
\[
\mathbf{y}_i = \operatorname{vec}(\boldsymbol{\mathcal{Y}}_i^\top) 
\in \mathbb{R}^{2R},
\qquad
\mathbf{y}_i \mid \mathbf{Z}_i \sim
\mathcal{N}_{2R}(\boldsymbol{\mu}_i, \boldsymbol{\Omega}),
\qquad
\boldsymbol{\Omega} = \boldsymbol{\Sigma} \otimes \boldsymbol{\Psi},
\]
where \(\boldsymbol{\mu}_i = \operatorname{vec}(\mathbf{M}_i^\top)\).

Partition
\[
\mathbf{y}_i=
\begin{pmatrix}
\mathbf{y}_{1i}\\[1pt]
\mathbf{y}_{2i}
\end{pmatrix},
\qquad
\boldsymbol{\mu}_i=
\begin{pmatrix}
\boldsymbol{\mu}_{1i}\\[1pt]
\boldsymbol{\mu}_{2i}
\end{pmatrix},
\]
with \(\mathbf{y}_{1i},\boldsymbol{\mu}_{1i}\in\mathbb{R}^R\) and
\(\mathbf{y}_{2i},\boldsymbol{\mu}_{2i}\in\mathbb{R}^R\).
The Kronecker product yields
\[
\boldsymbol{\Omega}_{11}=\sigma^2\boldsymbol{\Psi},
\qquad
\boldsymbol{\Omega}_{22}=\boldsymbol{\Psi},
\qquad
\boldsymbol{\Omega}_{12}
=\boldsymbol{\Omega}_{21}^\top
=\rho\sigma\,\boldsymbol{\Psi}.
\]

\medskip
\noindent
\textit{Restriction to the observed outcomes.}
Let \(R_i^\ast=|\mathcal{S}_i|\) and let
\(\mathbf{S}_i\) be the \(R_i^\ast\times R\) selection matrix that picks
the entries indexed by \(\mathcal{S}_i=\{r: C_{ri}=1\}\).
Define
\[
\mathbf{y}_{1i}^{\text{obs}} = \mathbf{S}_i \mathbf{y}_{1i},
\qquad
\mathbf{y}_{2i}^{\text{obs}} = \mathbf{S}_i \mathbf{y}_{2i},
\qquad
\boldsymbol{\mu}_{1i}^{\text{obs}} = \mathbf{S}_i \boldsymbol{\mu}_{1i},
\qquad
\boldsymbol{\mu}_{2i}^{\text{obs}} = \mathbf{S}_i \boldsymbol{\mu}_{2i},
\]
and
\(
\boldsymbol{\Psi}_i^{\text{obs}} = \mathbf{S}_i \boldsymbol{\Psi}\mathbf{S}_i^\top.
\)
The joint distribution of the observed subvectors is
\[
\begin{pmatrix}
\mathbf{y}_{1i}^{\mathrm{obs}}\\[1pt]
\mathbf{y}_{2i}^{\mathrm{obs}}
\end{pmatrix}
\bigg|\, \mathbf{Z}_i
\sim
\mathcal{N}_{2R_i^\ast}\!\left(
\begin{pmatrix}
\boldsymbol{\mu}_{1i}^{\mathrm{obs}}\\[1pt]
\boldsymbol{\mu}_{2i}^{\mathrm{obs}}
\end{pmatrix},
\begin{pmatrix}
\sigma^2\boldsymbol{\Psi}_i^{\mathrm{obs}} & 
\rho\sigma\,\boldsymbol{\Psi}_i^{\mathrm{obs}}\\[1pt]
\rho\sigma\,\boldsymbol{\Psi}_i^{\mathrm{obs}} & 
\boldsymbol{\Psi}_i^{\mathrm{obs}}
\end{pmatrix}
\right).
\]

The selection pattern $\mathbf{C}_i$ is equivalent to the truncation event
\[
\{\mathbf{C}_i\}
=
\bigl\{
\mathbf{y}_{2i}^{\mathrm{obs}}
\in
\mathcal{A}_i^{\mathrm{obs}}
\bigr\},
\qquad
\mathcal{A}_i^{\mathrm{obs}}
=
\bigl\{\mathbf{y}_2\in\mathbb{R}^{R_i^*} : y_{2,r}>0,\; r\in\mathcal{S}_i\bigr\}.
\]
The observed selection subvector satisfies the Gaussian representation
\begin{equation}
\label{eq:Y2obs_representation}
\mathbf{y}_{2i}^{\mathrm{obs}}
=
\boldsymbol{\mu}_{2i}^{\mathrm{obs}}
+
\boldsymbol{\varepsilon}_{2i}^{\mathrm{obs}},
\qquad
\boldsymbol{\varepsilon}_{2i}^{\mathrm{obs}}
\sim
\mathcal{N}_{R_i^\ast}\!\left(
\mathbf{0},\boldsymbol{\Psi}_i^{\mathrm{obs}}
\right),
\qquad
\boldsymbol{\varepsilon}_{2i}^{\mathrm{obs}} \in \mathcal{A}_i^{\mathrm{obs}}.
\end{equation}

\medskip
\noindent
\textit{Identification of the SUN parameters.}
The joint Gaussian vector
\(
(\mathbf{y}_{1i}^{\mathrm{obs}}, \mathbf{y}_{2i}^{\mathrm{obs}})
\)
together with the truncation event
\(
\mathbf{y}_{2i}^{\mathrm{obs}} \in \mathcal{A}_i^{\mathrm{obs}}
\)
satisfies exactly the block structure required by the unified skew--normal 
(SUN) construction of \citet{arellano2006unification}, where 
$U_0 \in \mathbb{R}^m$ and $U_1 \in \mathbb{R}^d$ in their Eq.~(8) 
correspond here to $\mathbf{y}_{2i}^{\mathrm{obs}} \in \mathbb{R}^{R_i^\ast}$ 
and $\mathbf{y}_{1i}^{\mathrm{obs}} \in \mathbb{R}^{R_i^\ast}$, 
respectively, giving $d = m = R_i^\ast$.
Comparing \eqref{eq:Y2obs_representation} with 
\citet[][Eq.\,(8)]{arellano2006unification}, we identify the SUN 
parameters for the conditional distribution of
\(\mathbf{y}_{1i}^{\mathrm{obs}} \mid (\mathbf{C}_i, \mathbf{Z}_i)\) as

\[
\boldsymbol{\xi}_i = \boldsymbol{\mu}_{1i}^{\mathrm{obs}},
\qquad
\boldsymbol{\Omega}_i = \rho\sigma\,\boldsymbol{\Psi}_i^{\mathrm{obs}},
\qquad
\boldsymbol{\Delta}_i = (\boldsymbol{\Psi}_i^{\mathrm{obs}})^{1/2},
\qquad
\boldsymbol{\tau}_i = \boldsymbol{\mu}_{2i}^{\mathrm{obs}},
\qquad
\boldsymbol{\Gamma}_i = \boldsymbol{\Psi}_i^{\mathrm{obs}},
\]
which establishes that
\[
\mathbf{y}_{1i}^{\mathrm{obs}} \mid (\mathbf{C}_i, \mathbf{Z}_i)
\sim
\mathrm{SUN}_{R_i^\ast, R_i^\ast}(
\boldsymbol{\xi}_i, \boldsymbol{\Omega}_i, \boldsymbol{\Delta}_i,
\boldsymbol{\tau}_i, \boldsymbol{\Gamma}_i),
\]
as stated.
\end{proof}

\begin{proof}[Proof of Corollary~\ref{cor:conditional_expectation_MSLn}]
By Proposition~\ref{prop:SUN}, the conditional distribution of
$\mathbf{y}_{1i}^{\mathrm{obs}} \mid (\mathbf{C}_i, \mathbf{Z}_i)$
belongs to the SUN family. Applying the closed-form expression for 
the first moment of a SUN distribution 
\citep[][Section~2.3]{arellano2006unification}, namely
\[
\mathbb{E}(\boldsymbol{U})
=
\boldsymbol{\xi}
+
\boldsymbol{\Omega}\,\boldsymbol{\Delta}^\top\,
\boldsymbol{\Gamma}^{-1/2}\,
\frac{
\phi_m\!\left(-\boldsymbol{\Gamma}^{-1/2}\boldsymbol{\tau}\right)
}{
\Phi_m\!\left(\boldsymbol{\Gamma}^{-1/2}\boldsymbol{\tau}\right)
},
\qquad
\boldsymbol{U} \sim \mathrm{SUN}_{d,m}(
\boldsymbol{\xi},\boldsymbol{\Omega},\boldsymbol{\Delta},
\boldsymbol{\tau},\boldsymbol{\Gamma}),
\]
and substituting the parameter identification of 
Proposition~\ref{prop:SUN}, namely
$\boldsymbol{\tau}_i = +\boldsymbol{\mu}_{2i}^{\mathrm{obs}}$ and
$\boldsymbol{\Gamma}_i = \boldsymbol{\Psi}_i^{\mathrm{obs}}$, so that
\[
-\boldsymbol{\Gamma}_i^{-1/2}\boldsymbol{\tau}_i = 
-(\boldsymbol{\Psi}_i^{\mathrm{obs}})^{-1/2}\boldsymbol{\mu}_{2i}^{\mathrm{obs}},
\qquad
\boldsymbol{\Gamma}_i^{-1/2}\boldsymbol{\tau}_i = 
+(\boldsymbol{\Psi}_i^{\mathrm{obs}})^{-1/2}\boldsymbol{\mu}_{2i}^{\mathrm{obs}},
\]
we obtain, using $\phi_m(-x) = \phi_m(x)$,
\begin{align*}
\mathbb{E}\!\left(
\mathbf{y}_{1i}^{\mathrm{obs}}
\mid
\mathbf{C}_i,\mathbf{Z}_i
\right)
&=
\boldsymbol{\mu}_{1i}^{\mathrm{obs}}
+
(\rho\sigma\,\boldsymbol{\Psi}_i^{\mathrm{obs}})\,
(\boldsymbol{\Psi}_i^{\mathrm{obs}})^{1/2}\,
(\boldsymbol{\Psi}_i^{\mathrm{obs}})^{-1/2}\,
\frac{
\phi_{R_i^\ast}\!\left(
(\boldsymbol{\Psi}_i^{\mathrm{obs}})^{-1/2}
\boldsymbol{\mu}_{2i}^{\mathrm{obs}}
\right)
}{
\Phi_{R_i^\ast}\!\left(
(\boldsymbol{\Psi}_i^{\mathrm{obs}})^{-1/2}
\boldsymbol{\mu}_{2i}^{\mathrm{obs}}
\right)
}\\[6pt]
&=
\boldsymbol{\mu}_{1i}^{\mathrm{obs}}
+
(\rho\sigma)\,
\underbrace{
(\boldsymbol{\Psi}_i^{\mathrm{obs}})^{1/2}\,
\frac{
\phi_{R_i^\ast}\!\left(
(\boldsymbol{\Psi}_i^{\mathrm{obs}})^{-1/2}
\boldsymbol{\mu}_{2i}^{\mathrm{obs}}
\right)
}{
\Phi_{R_i^\ast}\!\left(
(\boldsymbol{\Psi}_i^{\mathrm{obs}})^{-1/2}
\boldsymbol{\mu}_{2i}^{\mathrm{obs}}
\right)
}
}_{\displaystyle =\, \boldsymbol{\delta}_i^{\mathrm{obs}}(\mathbf{C}_i)}.
\end{align*}
It remains to verify that this expression coincides with $\boldsymbol{\delta}_i^{\mathrm{obs}}(\mathbf{C}_i) = 
\mathbb{E}\!\left(\mathbf{y}_{2i}^{\mathrm{obs}} - \boldsymbol{\mu}_{2i}^{\mathrm{obs}}
\mid \mathbf{C}_i, \mathbf{Z}_i\right)$.
By Proposition~\ref{prop:SUN}, we have
$\mathbf{y}_{2i}^{\mathrm{obs}} \mid \mathbf{Z}_i \sim 
\mathcal{N}_{R_i^\ast}(\boldsymbol{\mu}_{2i}^{\mathrm{obs}}, 
\boldsymbol{\Psi}_i^{\mathrm{obs}})$
truncated to $\mathcal{A}_i^{\mathrm{obs}} = \{\mathbf{y}_2 \in \mathbb{R}^{R_i^*} 
: y_{2,r} > 0,\; r \in \mathcal{S}_i\}$.
Setting $\mathbf{u} = (\boldsymbol{\Psi}_i^{\mathrm{obs}})^{-1/2}
(\mathbf{y}_{2i}^{\mathrm{obs}} - \boldsymbol{\mu}_{2i}^{\mathrm{obs}})$,
we have $\mathbf{u} \mid \mathbf{Z}_i \sim 
\mathcal{N}_{R_i^\ast}(\mathbf{0}, \mathbf{I}_{R_i^\ast})$, and the 
truncation event $\mathbf{C}_i$ becomes
\[
\{\mathbf{C}_i\} 
= 
\bigl\{ \mathbf{u} > \mathbf{a}_i \bigr\},
\qquad
\mathbf{a}_i = -(\boldsymbol{\Psi}_i^{\mathrm{obs}})^{-1/2}
\boldsymbol{\mu}_{2i}^{\mathrm{obs}}.
\]
Since $\mathbf{y}_{2i}^{\mathrm{obs}} - \boldsymbol{\mu}_{2i}^{\mathrm{obs}} 
= (\boldsymbol{\Psi}_i^{\mathrm{obs}})^{1/2}\mathbf{u}$, linearity of 
expectation gives
\[
\mathbb{E}\!\left(
\mathbf{y}_{2i}^{\mathrm{obs}} - \boldsymbol{\mu}_{2i}^{\mathrm{obs}}
\mid \mathbf{C}_i, \mathbf{Z}_i
\right)
=
(\boldsymbol{\Psi}_i^{\mathrm{obs}})^{1/2}\,
\mathbb{E}\!\left( \mathbf{u} \mid \mathbf{C}_i, \mathbf{Z}_i \right).
\]
Since $\mathbf{u}$ has independent standard normal components truncated 
to $u_r > a_{ri}$, the mean of each component follows from the standard 
univariate truncated normal formula
\[
\mathbb{E}(u_r \mid u_r > a_{ri}) 
= 
\frac{\phi(a_{ri})}{1 - \Phi(a_{ri})}.
\]
Substituting $a_{ri} = -(\boldsymbol{\Psi}_i^{\mathrm{obs}})^{-1/2}
\boldsymbol{\mu}_{2i,r}^{\mathrm{obs}}$ and applying
$\phi(-x) = \phi(x)$ to the numerator and $1 - \Phi(-x) = \Phi(x)$ 
to the denominator, we obtain
\[
\mathbb{E}(u_r \mid u_r > a_{ri}) 
=
\frac{
\phi\!\left(
(\boldsymbol{\Psi}_i^{\mathrm{obs}})^{-1/2}\boldsymbol{\mu}_{2i,r}^{\mathrm{obs}}
\right)
}{
\Phi\!\left(
(\boldsymbol{\Psi}_i^{\mathrm{obs}})^{-1/2}\boldsymbol{\mu}_{2i,r}^{\mathrm{obs}}
\right)
}.
\]
Stacking these univariate results into a vector gives
\[
\mathbb{E}\!\left( \mathbf{u} \mid \mathbf{C}_i, \mathbf{Z}_i \right)
=
\frac{
\phi_{R_i^\ast}\!\left(
(\boldsymbol{\Psi}_i^{\mathrm{obs}})^{-1/2}
\boldsymbol{\mu}_{2i}^{\mathrm{obs}}
\right)
}{
\Phi_{R_i^\ast}\!\left(
(\boldsymbol{\Psi}_i^{\mathrm{obs}})^{-1/2}
\boldsymbol{\mu}_{2i}^{\mathrm{obs}}
\right)
},
\]
where $\phi_{R_i^\ast}$ and $\Phi_{R_i^\ast}$ denote the 
$R_i^\ast$-dimensional standard normal density and CDF evaluated 
componentwise, and therefore
\[
\boldsymbol{\delta}_i^{\mathrm{obs}}(\mathbf{C}_i)
=
(\boldsymbol{\Psi}_i^{\mathrm{obs}})^{1/2}\,
\mathbb{E}\!\left( \mathbf{u} \mid \mathbf{C}_i, \mathbf{Z}_i \right)
=
(\boldsymbol{\Psi}_i^{\mathrm{obs}})^{1/2}\,
\frac{
\phi_{R_i^\ast}\!\left(
(\boldsymbol{\Psi}_i^{\mathrm{obs}})^{-1/2}
\boldsymbol{\mu}_{2i}^{\mathrm{obs}}
\right)
}{
\Phi_{R_i^\ast}\!\left(
(\boldsymbol{\Psi}_i^{\mathrm{obs}})^{-1/2}
\boldsymbol{\mu}_{2i}^{\mathrm{obs}}
\right)
},
\]
which coincides with the explicit formula stated in the Corollary, and the poof follows.
\end{proof}

\subsubsection*{Proof of Lemma 1}
{
\begin{proof}
Define $\mathbf{y}=\operatorname{vec}(\boldsymbol{\mathcal{Y}})$, $\bm=\operatorname{vec}(\mathbf{Z}\boldsymbol{B})$ and $\widehat{\mathcal{F}}=\{\widetilde{\mathcal{V}}, \bC, \widehat{\btheta}\}$. By linearity of expectation,
\[
\widehat{\boldsymbol{\Delta}}
= \mathbb{E}\!\left[\Big(\mathbf{y}\mathbf{y}^{\top}-\mathbf{y}\bm^{\top}-\bm \mathbf{y}^{\top}+\bm \bm^{\top}\Big)\,\Big|\,\widehat{\mathcal{F}}\right]
= \mathbb{E}\!\left[(\mathbf{y}-\bm)(\mathbf{y}-\bm)^{\top}\,\Big|\,\widehat{\mathcal{F}}\right].
\]
Hence $\widehat{\boldsymbol{\Delta}}$ is symmetric. For any $\boldsymbol{v}\in\mathbb{R}^{2R}$,
\[
\boldsymbol{v}^{\top}\widehat{\boldsymbol{\Delta}}\,\boldsymbol{v}
= \mathbb{E}\!\left[\big(\boldsymbol{v}^{\top}(\mathbf{y}-\bm)\big)^{2}\,\Big|\,\widehat{\mathcal{F}}\right]>0.
\]
Therefore, $\widehat{\boldsymbol{\Delta}} \succ 0$, since for every nonzero $\boldsymbol{v}$,
$\mathbb{P}(\boldsymbol{v}^{\top}(\mathbf{y}-\bm)= 0\mid \widehat{\mathcal{F}})=0$ a.s.
\end{proof}
}

\subsubsection*{Proof of Theorem 1}

\begin{proof}

First, recall from Lemma \ref{lemma1} that $\widehat{\bDelta}^{\ka *}_i = \widehat{\LL}^{\ka}_{i}\widehat{\LL}^{\ka\top}_{i}$, since $\widehat{\bDelta}^{\ka *}_i$ is  a positive definite matrix. Thus,
\begin{align*}
	{Q(\widehat{\boldsymbol{B}}^{(k+1)}, {\bSigma}, {\bPsi}\mid\widehat{\btheta}^{\ka})} 
    &=  - n \log |\bPsi| - \frac{nR}{2} \log|\bSigma|-\frac{1}{2}\sumas\tr\left[(\bPsi\otimes\bSigma)^{-1} \widehat{\bDelta}^{\ka*}_i\right]\\&=  - n \log (|\bPsi|) - \frac{nR}{2} \log(|\bSigma|)-\frac{1}{2}\sumas\tr\left[(\bPsi\otimes\bSigma)^{-1} \widehat{\LL}^{\ka}_{i}\widehat{\LL}^{\ka\top}_{i}\right]\\
	&= - n \log |\bPsi| - \frac{nR}{2} \log|\bSigma|-\frac{1}{2}\sumas\tr\left[\widehat{\LL}^{\ka\top}_{i}(\bPsi\otimes\bSigma)^{-1} \widehat{\LL}^{\ka}_{i}\right]\\
    &= - n \log |\bPsi| - \frac{nR}{2} \log|\bSigma|-\frac{1}{2}\sumas\sum_{j=1}^{2R}\left[\widehat{\LL}^{\ka\top}_{ij}(\bPsi\otimes\bSigma)^{-1} \widehat{\LL}^{\ka}_{ij}\right],
\end{align*}
where the last equality holds because $\widehat{\LL}^{\ka}_{ij}$ is the $j$th column of the $2R \times 2R$ lower triangular matrix $\widehat{\LL}^{\ka}_{i}$. Applying the trace-vectorization identity, it follows that
\begin{align*}
	{Q(\widehat{\boldsymbol{B}}^{(k+1)}, {\bSigma}, {\bPsi}\mid\widehat{\btheta}^{\ka})} &=
    - n \log |\bPsi| - \frac{nR}{2} \log|\bSigma|  -\frac{1}{2} \sumas \sum_{j=1}^{2R} 
    \big[ \text{vec}(\widehat{\bD}^{\ka}_{ij})^\top (\bPsi\otimes\bSigma)^{-1} \text{vec}(\widehat{\bD}^{\ka}_{ij}) \big] \\
	&=  - n \log |\bPsi| - \frac{nR}{2} \log|\bSigma|-\frac{1}{2}\sumas\sum_{j=1}^{2R}\tr\left[\bSigma^{-1}\widehat{\bD}^{\ka}_{ij}\bPsi^{-1}\widehat{\bD}^{\top\ka}_{ij}\right],
\end{align*}
which concludes the proof.
\end{proof}

%\newpage
%\section*{References}
\bibliographystyle{chicago} 
\bibliography{bibliornl}
%\bibliography{chicago}

\end{document}